\newcommand{\leaf}{leaf}
    \newtheorem{theorem}{Theorem}
    \newtheorem{corollary}[theorem]{Corollary}
    \newtheorem{lemma}[theorem]{Lemma}
    \newtheorem{definition}[theorem]{Definition}
    	\definecolor{darkgreen}{rgb}{0.01, 0.93, 0.29}
\definecolor{lightbrown}{rgb}{0.91, 0.4, 0.11}
\title{Oriented Diameter of Planar Triangulations}
\author[1]{Debajyoti Mondal}
\author[2]{N. Parthiban}
\author[3]{Indra Rajasingh}
\affil[1]{Department of Computer Science, University of Saskatchewan, Saskatoon, Canada\\
  \texttt{dmondal@cs.usask.ca}} 
\affil[2]{School of Computing, SRM Institute of Science and Technology, Chennai, India\\
  \texttt{parthiban24589@gmail.com}}
\affil[3]{School of Advanced Sciences, Vellore Institute of Science and Technology, Chennai, India\\
  \texttt{indrarajasingh@yahoo.com}}
\newcommand{\jyoti}[1]{{\color{black} #1}}
\newcommand{\parthiban}[1]{{\color{black} #1}}
\begin{document}
\maketitle              
\begin{abstract}
The diameter of an undirected or a directed graph is defined to be the maximum shortest path distance over all pairs of vertices in the graph. Given an undirected graph $G$, we examine the problem of assigning directions to each edge of $G$ such that the diameter of the resulting oriented graph is minimized. The minimum  diameter over all  strongly connected orientations is called the oriented diameter of $G$. The problem of determining the oriented diameter of a graph is known to be NP-hard, but the time-complexity question is open for planar graphs. In this paper we compute the exact value of the oriented diameter for triangular grid graphs. We then prove an $n/3$ lower  bound and an  $n/2+O(\sqrt{n})$ upper bound on the oriented diameter of planar triangulations.  It is known that given a planar graph $G$ with bounded treewidth and a fixed positive integer $k$, one can determine in linear time whether the oriented diameter of $G$ is at most $k$. In contrast, we consider a weighted version of the oriented diameter problem and show it to be is weakly NP-complete for planar graphs with bounded pathwidth.
\end{abstract}


\section{Introduction}

\jyoti{An undirected graph is called \emph{oriented} when each edge of the graph is assigned an orientation. Computing such orientations often \parthiban{requires} the resulting directed graph to be \emph{strongly connected}, i.e., every vertex in the directed graph must be reachable from every other vertex. This is useful in  transforming two-way traffic or communication networks to one-way networks especially when one-way communication is preferred or more cost effective over two-way communication channels~\cite{Kumar2020,DBLP:journals/networks/RobertsX92}, as well as finds application in the context of network broadcasting and gossiping~\cite{DBLP:journals/dam/FraigniaudL94,DBLP:journals/siamcomp/Krumme92}.}  

The \emph{diameter} of a directed or undirected graph is the maximum shortest path distance over all pairs of vertices, where the distance of a path is measured by the number of its edges. The oriented diameter $OD(G)$ of an undirected graph $G$ is the \jyoti{smallest} diameter over all the  strongly connected orientations of $G$. 
\jyoti{In 1978, Chavatal et al.~\cite{Chvatal1978} proved that determining whether the oriented diameter of a graph is at most two is NP-complete. They showed that the oriented diameter of a $2$-edge-connected graph with diameter $2$ is at most $6$, and there exist graphs achieving this upper bound. For graphs with diameter 3, the known upper and lower bounds are 9 and 11, respectively~\cite{Kwok2010}. Several studies attempted to provide good upper bounds on the orientated diameter problem of connected and bridgeless graphs~\cite{Fomin2001,Sascha2012,Dankelmann2018}.} In 2001,  Fomin et al.~\cite{Fomin2001} discovered the relation $OD(G)  \leq 9 \gamma (G) - 5$ between the oriented diameter and the size $ \gamma(G)$ of a minimum dominating set.  Dankelmann et. al.~\cite{Dankelmann2018} showed that every bridgeless graph $G$ of order $n$ and maximum degree $\Delta$ has an orientated diameter at most $n-\Delta-3$.

\jyoti{A rich body of literature examined  oriented diameter for} interconnection networks~\cite{Fomin2004, koh1996, koh2002, Ng2005} \jyoti{and for various  interesting graph classes}. Gutin et al.~\cite{Gutin2002}  studied some well-known classes of strong digraphs where oriented diameter exceeds the diameter of the underlying undirected graph only by a small constant.  Fujita et al.~\cite{Fujita2013} considered the problem of finding the minimum oriented diameter of star graphs and proved an upper bound of $[5n/2]+2$ for any $n \geq 3$, which is a significant improvement  over the upper bound $2n(n - 1)$ derived by Chvatal and Thomassen~\cite{Chvatal1978}. \jyoti{Fomin et al.~\cite{Fomin2002} showed that computing oriented diameter remains NP-hard for split graphs and provided approximation algorithms for chordal graphs. Later, they showed that the oriented diameter of an  AT-free graph is upper bounded by a linear function in its graph diameter~\cite{Fomin2004}.} 

\jyoti{
We consider oriented diameter of planar graphs. Eggemann  and Noble~\cite{Eggemann2009} showed that given a planar graph $G$ with bounded treewidth and a fixed positive integer $k$, one can determine in linear time whether the oriented diameter of $G$ is at most $k$. They also showed how to remove the dependency on the treewidth and gave an algorithm that given a fixed positive integer $k$,   can decide whether the oriented diameter of a planar graph is at most $k$ in linear time.   
Recently, Wang et al.~\cite{DBLP:journals/jgt/WangCDGSV21} have showed that the diameter of a  maximal outerplanar graph with at least three vertices is upper bounded by   $\lceil n/2 \rceil$ with four exceptions, and the upper bound is sharp. 

\subsection*{Our Contribution}
In this paper we compute the exact  value on the oriented diameter for  triangular grid graphs.  
This result  relates to the vast literature that attempts to compute diameter preserving or optimal orientation for well-known graph classes  (e.g., for two-dimensional  torus~\cite{DBLP:journals/networks/KonigKL98},  two-dimensional  grid~\cite{DBLP:journals/siamdm/RobertsX88,DBLP:journals/networks/RobertsX92}, hypercube~\cite{McCanna}, products of graphs~\cite{DBLP:journals/dm/KohT00a}).  We then generalize the idea of computing oriented diameter of triangular grid graphs to give an algorithm for planar triangulations. Given a planar triangulation with $n$ vertices, we show how to orient its edges such that the diameter of the resulting oriented  graph is upper bounded by $n/2+O(\sqrt{n})$. This is interesting since $\lceil n/2 \rceil$  is already known to be an upper bound on the oriented diameter for the maximal outerplanar graphs~\cite{DBLP:journals/jgt/WangCDGSV21}.   
 We next show that there exist planar triangulations with oriented diameter at least $n/3$. Finally, we show that the weighted version of the oriented diameter problem is weakly NP-complete even for planar graphs of bounded pathwidth, which  contrasts the linear-time algorithm of Eggemann  and Noble~\cite{Eggemann2009} for the unweighted variant. 
}

\section{Oriented Diameter of a Triangular Grid}
In this section we compute the exact value of the oriented diameter of triangular grid graphs. A triangular tesselation of the plane with equilateral triangles is called a triangular sheet or triangular grid. The vertices are the intersection of lines and the lines between two vertices are the edges. 

\begin{definition}
Consider the ordered 3-tuples of integers $(i, j, k)$ such that $i+j+k=r$. Let these 3-tuples represent the  vertices and let two vertices be joined if the sum of the absolute differences of their coordinates is $2$. The graph   generated is referred to as a triangular grid graph $T_r$ of dimension $r$. 
\end{definition}

\begin{figure}[h]
\centering
\includegraphics[scale=0.4]{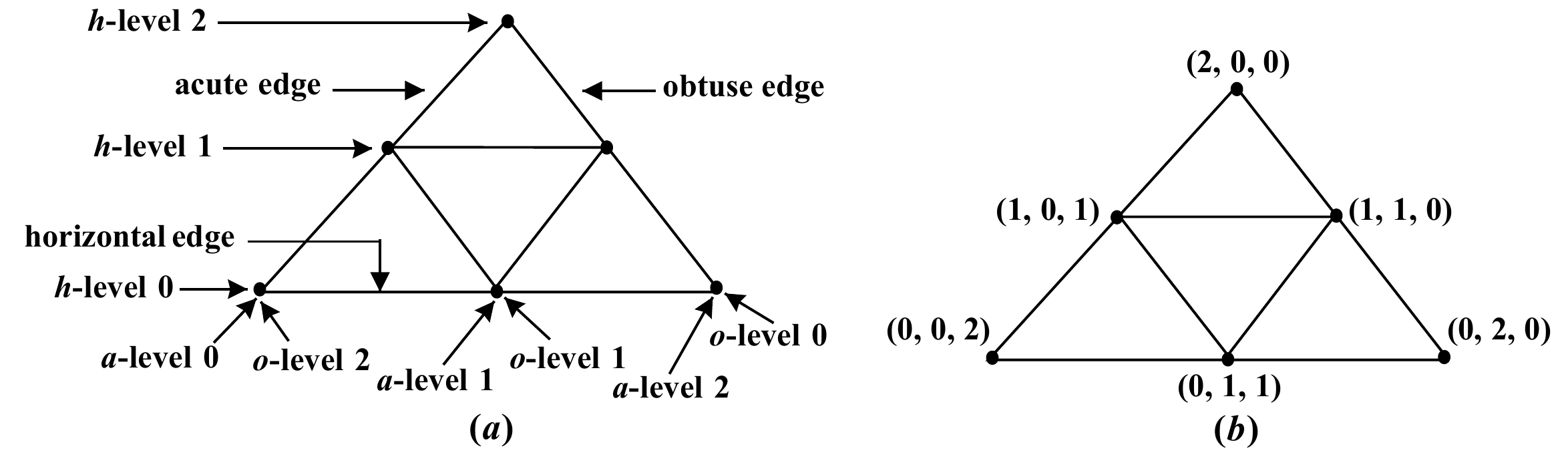}
\includegraphics[scale=0.37]{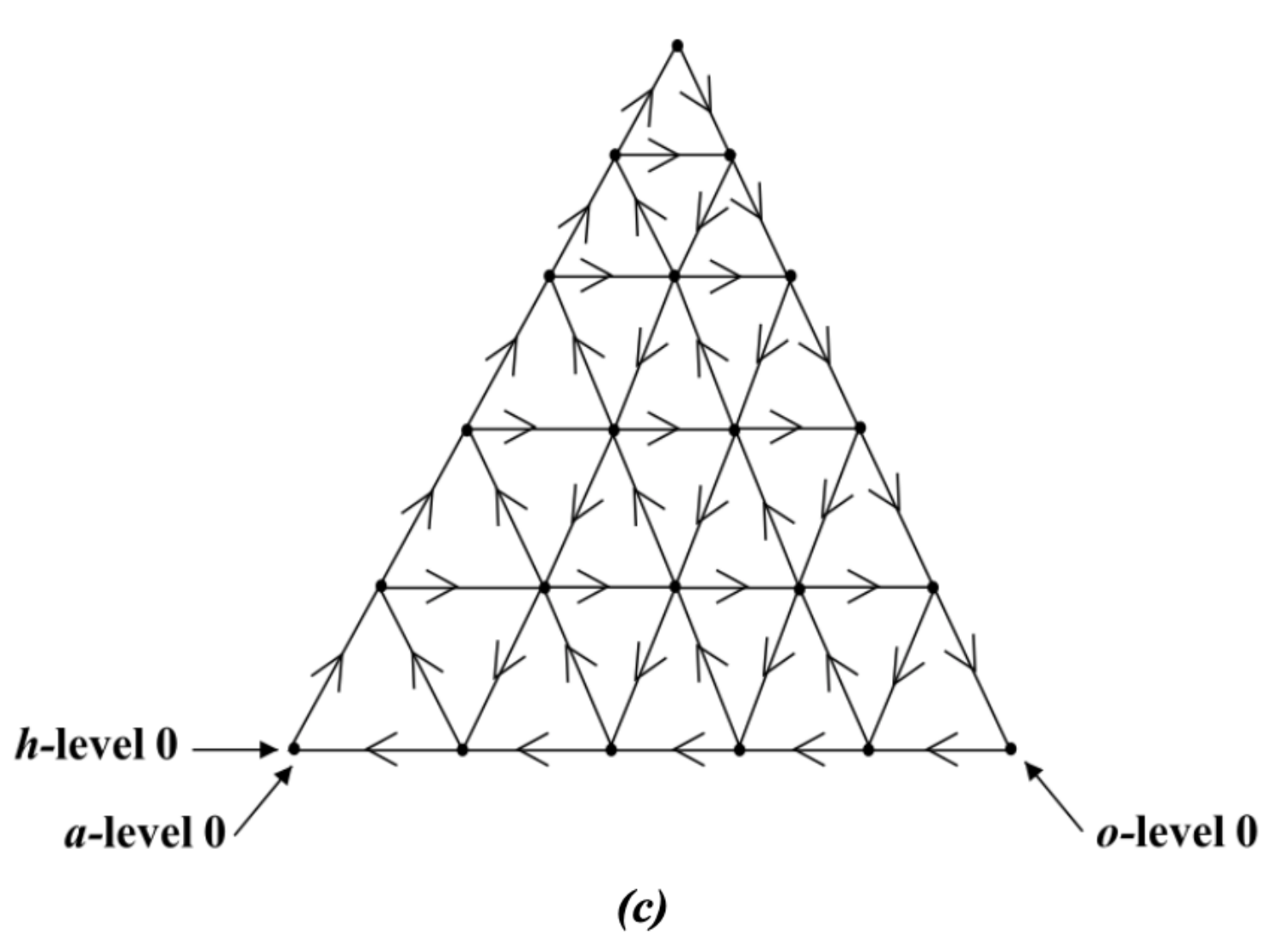}
\caption{The triangular-grid network $T_{2}$.  Illustration for (a) levels and (b) 3-tuples.  \parthiban{(c)} An oriented triangular grid network $T(5)$. }
\label{t5}
\end{figure}

There are $r+1$ levels in a triangular grid $T_r$ and in each level $i$, there are $i+1$ vertices, $0 \leq i \leq r$. This implies that $T_r$ has $(r+1)(r+2)/2$ vertices and $3r(r+1)/2$ edges. Diameter of $T_r$ is $r$. 
The edges of $T_{r}$ can be partitioned into horizontal edges, acute edges, and obtuse edges. The shortest path comprising of horizontal (acute, obtuse) edges with end vertices of degree $2$ in $T_r$ is said to be at $h$-level ($a$-level, $o$-level) ~$0$. Inductively, the path through the parents of vertices at level $i$ is said to be at $h$-level~ ($a$-level,~ $o$-level)~ $i + 1$,~ $0 \leq i \leq r-1 $. The vertex $v = (i, j, k)$ in $T_{r}$ is the point of intersection of the paths representing its $h$-level $i$, $a$-level $j$, and $o$-level $k$. See Figures $1(a)$ and $1(b)$. 


A simple observation on the length of a shortest cycle passing through any two vertices of length 2 in $T_r$, $r \geq 2$ yields the following result. 
	
\begin{lemma} 
\label{r+1}
Let $G$ be the triangular grid $T_r$, $r \geq 2$. Then $OD(G) \geq r+1$.
\end{lemma}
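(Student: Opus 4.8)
The plan is to exhibit a pair of vertices whose two directed distances cannot both equal the undirected diameter $r$, which forces one of the two directions to cost at least $r+1$ in any strongly connected orientation.

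First I would take $u=(r,0,0)$ and $v=(0,r,0)$, two of the three degree‑$2$ corner vertices of $T_r$. A short counting argument on the coordinates gives $d_{T_r}(u,v)=r$: every edge changes the first coordinate by at most $1$, so at least $r$ steps are needed, while the boundary path $(r,0,0),(r-1,1,0),\dots,(0,r,0)$ uses exactly $r$. The same counting shows this path is the \emph{unique} shortest $u$–$v$ path: on any walk of length $r$ the first coordinate must strictly decrease at every step and the second must strictly increase at every step, which pins down each intermediate vertex uniquely (and rules out any use of the third coordinate). This uniqueness is the ``simple observation'' alluded to above, and supplying it cleanly is essentially the only content of the argument; I expect this coordinate bookkeeping to be the main (albeit mild) obstacle.

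Next I would fix an arbitrary strongly connected orientation $D$ of $T_r$. A shortest directed path is always a simple path, so if $d_D(u,v)=r$ then its underlying undirected path is a shortest $u$–$v$ path in $T_r$, hence equals the unique path $P=w_0w_1\cdots w_r$ above, with each edge $w_iw_{i+1}$ oriented from $w_i$ to $w_{i+1}$. Symmetrically, $d_D(v,u)=r$ would force every edge of $P$ to be oriented from $w_{i+1}$ to $w_i$. Since $r\ge 2$, the path $P$ has at least one edge, and a single edge cannot carry both orientations in $D$; hence $d_D(u,v)=r$ and $d_D(v,u)=r$ cannot both hold. Therefore $\max\{d_D(u,v),d_D(v,u)\}\ge r+1$, so $\operatorname{diam}(D)\ge r+1$, and since $D$ was arbitrary we conclude $OD(T_r)\ge r+1$.

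An alternative route would be to observe that any cycle through the two corners $u$ and $v$ has length at least $2r+1$ (it must use the unique shortest path on one side and a strictly longer path on the other) and to relate this to the length of the closed walk formed by shortest directed $u\to v$ and $v\to u$ paths; but this requires extra care about whether those two directed paths are internally disjoint, so the direct uniqueness argument above seems preferable.
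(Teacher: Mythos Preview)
Your proposal is correct and follows essentially the same approach as the paper: both arguments hinge on the uniqueness of the length-$r$ shortest path between two corner vertices, so that in any orientation at most one of the two directions can achieve distance $r$. The paper phrases this via the observation that the shortest undirected cycle through the two corners has length $r+(r+1)=2r+1$ (your ``alternative route''), while you spell out the uniqueness directly via the coordinate bookkeeping; your version is the more carefully justified of the two.
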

\begin{proof}
The diameter of $T_r$ is $r$, which is realized between two end vertices of $h$-$level~0$ line. Any shortest cycle in $G$ passing through these two end vertices has length $r + (r+1) = 2r + 1$. Hence $OD(G) \geq r+1$.
\end{proof}


The following algorithm yields an oriented triangular grid with diameter $r+1$. In the sequel, if $\vv {P}$ and $\vv {Q}$ are directed paths then $\vv {P} \circ \vv {Q}$ denotes the concatenation of the paths $\vv {P}$ and $\vv {Q}$ with end of $\vv {P}$ as the beginning of $\vv {Q}$.

\bigskip

\noindent
\textbf{Input:} A triangular grid graph $T_r$ of dimension $r$.

\noindent
\textbf{Output:} An orientation of $T_r$  with diameter $r+1$.

\noindent
\textbf{Algorithm:} Direct the $h$-$level~ 0$ line from right to left; $a$-$level ~0$ line from bottom to top; $o$-$level ~0$ 
line from top to bottom. Direct all other horizontal lines from left to right, acute lines from top to bottom and the obtuse lines from bottom to top. See Figure $\ref{t5}$\parthiban{(c)}.

 \bigskip\noindent
{\bf Proof of correctness:} Given any two vertices $u$ and $v$ in $T_r$, we have to show that there exist a $(u, v)$-directed path and a $(v, u)$-directed path, both of length at most $r+1$. Let $u$ be $(i,j,k)$ and $v$ be $(l,m,n)$. Without loss of generality assume that $i<l$.

\begin{figure}[h]
\centering
\includegraphics[scale=0.37]{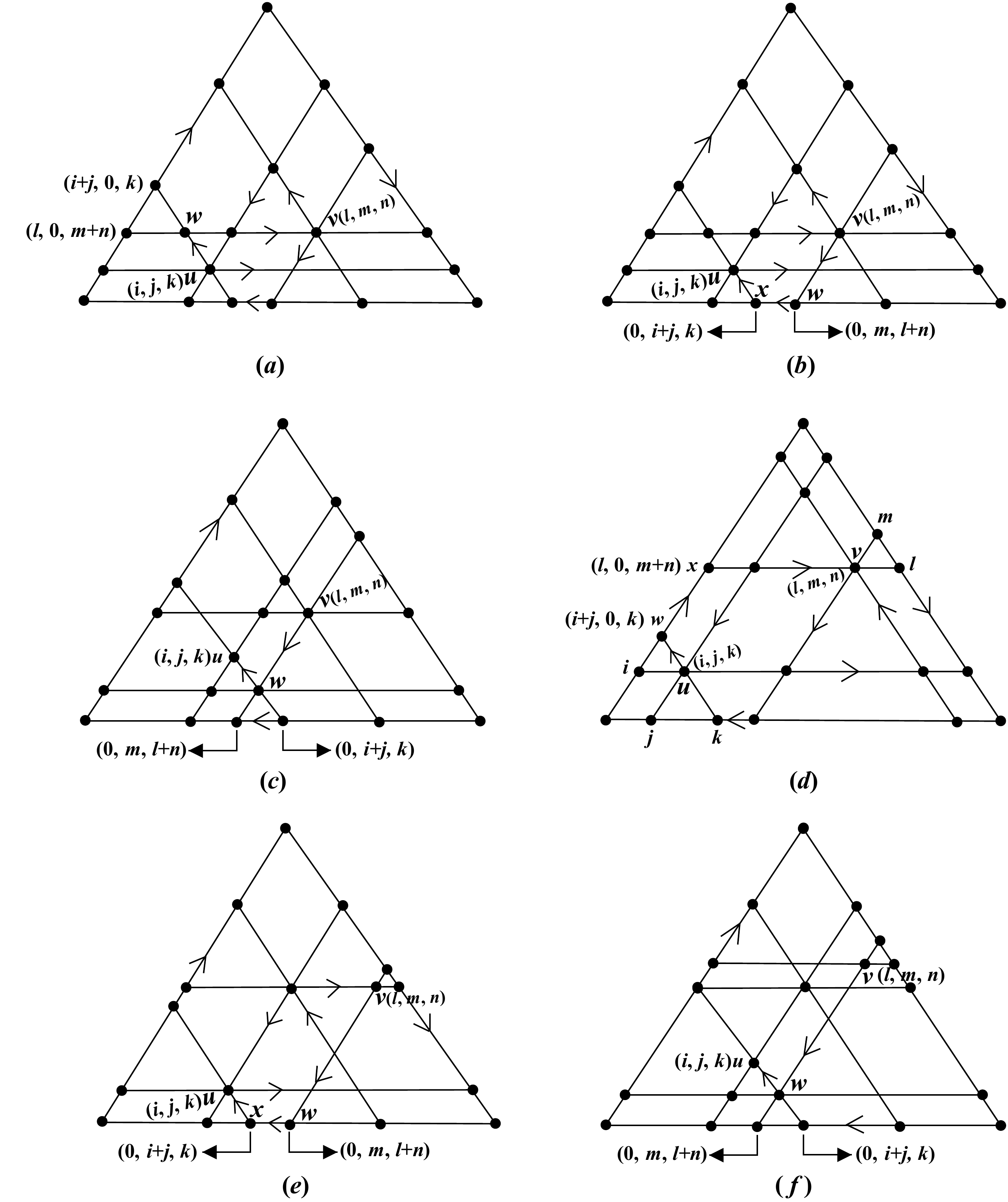}
\caption{(a) $\vv{P}$: $(u,w)$-directed path,  $\vv{Q}$:$(w,v)$-directed path,  $\vv{P} \circ \vv{Q}$ is directed path from $u$ to $v$,
 (b) $\vv {P}:(v,w)$-directed path, $\vv {Q}:(w,x)$-directed path, $\vv {R}:(x,u)$-directed path, $\vv {P} \circ \vv {Q} \circ \vv {R}$ is directed path from $v$ to $u$,
 (c) $\vv {P}:(v,w)$-directed path, $\vv {Q}:(w,u)$-directed path, $\vv {P} \circ \vv {Q} $ is directed path from $v$ to $u$,
 (d) $\vv {P}:(u,w)$-directed path, $\vv {Q}$:$(w,x)$-directed path, $\vv {R}:(x,v)$-directed path, $\vv {P} \circ \vv {Q} \circ \vv {R}$ is $(u,v)$-directed path,
 (e) $\vv {P}:(v,w)$-directed path, $\vv {Q}:(w,x)$-directed path, $\vv {R}:(x,u)$-directed path, $\vv {P} \circ \vv {Q} \circ \vv {R}$ is directed path from $v$ to $u$ and
 (f) $\vv {P}:(v,w)$-directed path, $\vv {Q}:(w,u)$-directed path, $\vv {P} \circ \vv{Q} $ is directed path from $v$ to $u$.}
\label{fig_3}
\end{figure}

\noindent \textbf{Case 1(\textit{a})}: $j<m$ and $i+j \geq l$

\noindent\textbf{Path from $u$ to $v$:}
Let $\vv {P}$ be the directed path from $u$ along the $o$-level $k$ till it reaches vertex $w$ in the $h$-level $l$. Let $\vv {Q}$ be the directed path from $w$ along the $h$-level $l$ till it reaches vertex $v$. Then $\vv {P} \circ \vv {Q}$ is a directed path from $u$ to $v$ of length $(l-i)+(k-n) \leq j+k-n = r-(i+n) \leq r$. See Figure $\ref{fig_3}(a)$ .

\noindent\textbf{Path from $v$ to $u$:}
When $(i+j)<m$, let $\vv {P}$ be the directed path from $v$ along the $a$-level $m$ till it reaches vertex $w$ in $h$-level $0$. Let $\vv {Q}$ be the directed path from $w$ along $h$-level $0$ till it reaches $x$ at $o$-level $k$. Let $\vv {R}$ be the directed path from $x$ along $o$-level $k$ till it reaches vertex $u$. 
Then $\vv {P} \circ \vv {Q} \circ \vv {R}$ is a directed path from $v$ to $u$ of length 
$l+(m-(i+j))+ i = r-n-j < r$. See Figure $\ref{fig_3}(b)$.
When $m \leq i+j$, let $\vv {P}$ be the directed path from $v$ along the $a$-level $m$ till it reaches vertex $w$ in $o$-level $k$; let $\vv {Q}$ be the directed path from $w$ along $o$-level $k$ till it reaches $u$. 
Then $\vv {P} \circ \vv {Q} $ is a directed path from $v$ to $u$ of length at most of
$l+(m-j) = r-n-j < r$. See Figure $\ref{fig_3}(c)$. \newline\textbf{Case 1(\textit{b})}: $j<m$ and $i+j < l$

\noindent\textbf{Path from $u$ to $v$:}
Let $\vv {P}$ be the directed path from $u$ along the $o$-level $k$ till it reaches vertex $w$ on the $a$-level $0$. 
Let $\vv {Q}$ be the directed path from $w$ along the $a$-level $0$ till it reaches vertex $x$ at $h$-level $l$. 
Let $\vv {R}$ be the directed path from $x$ along $h$-level $l$ till it reaches $v$. 
Then $\vv {P} \circ \vv {Q} \circ \vv {R}$ is a directed path from $u$ to $v$ of length $j+(l-(i+j))+m = l+m-i = r-i-n <r$. See Figure $\ref{fig_3}(d)$. \newline \noindent \textbf{Path from $v$ to $u$:}
When $(i+j)<m$, let $\vv {P}$ be the directed path from $v$ along the $a$-level $m$ till it reaches vertex $w$ in $h$-level $0$. Let $\vv {Q}$ be the directed path from $w$ along $h$-level $0$ till it reaches $x$ at $o$-level $k$. Let $\vv {R}$ be the directed path from $x$ along $o$-level $k$ till it reaches vertex $u$. 
Then $\vv {P} \circ \vv {Q} \circ \vv {R}$ is a directed path from $v$ to $u$ of length 
$l+(m-(i+j))+ i = r-n-j < r$. See Figure $\ref{fig_3}(e)$.
When $m \leq i+j$, let $\vv {P}$ be the directed path from $v$ along the $a$-level $m$ till it reaches vertex $w$ in $o$-level $k$; let $\vv {Q}$ be the directed path from $w$ along $o$-level $k$ till it reaches $u$. 
Then $\vv {P} \circ \vv {Q} $ is a directed path from $v$ to $u$ of length at most of
$l+(m-j) = r-n-j < r$. See Figure $\ref{fig_3}(f)$.

\begin{figure}[h]
\centering
\includegraphics[scale=0.36]{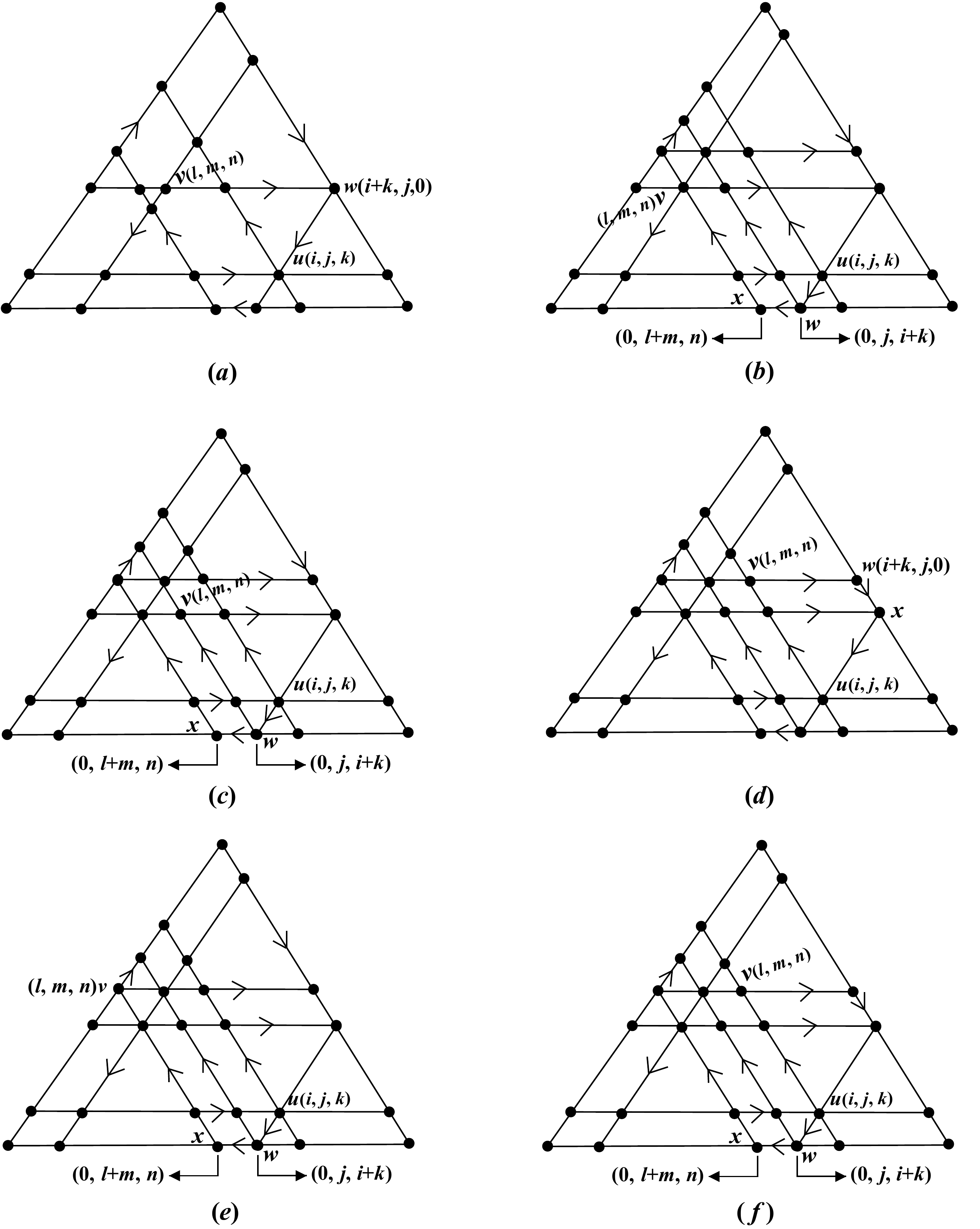}
\caption{$(a)$ $\vv {P}:(v,w)$-directed path, $\vv {Q}$:$(w,u)$-directed path,  $\vv {P} \circ \vv {Q}$ is directed path from $v$ to $u$,
$(b)$ $\vv {P}:(u,w)$-directed path, $\vv {Q}:(w,x)$-directed path, $\vv {R}:(x,v)$-directed path, $\vv {P} \circ \vv {Q} \circ \vv {R}$ is directed path from $u$ to $v$,
$(c)$ $\vv {P}:(u,w)$-directed path, $\vv {Q}:(w,v)$-directed path, $\vv {P} \circ \vv {Q}$ is $(u,v)$-directed path,
$(d)$ $\vv {P}:(v,w)$-directed path, $\vv {Q}$:$(w,x)$-directed path, $\vv {R}:(x,u)$-directed path, $\vv {P} \circ \vv {Q} \circ \vv {R}$ is $(v,u)$-directed path,
$(e)$ $\vv {P}:(u,w)$-directed path, $\vv {Q}:(w,x)$-directed path, $\vv {R}:(x,v)$-directed path, $\vv {P} \circ \vv {Q} \circ \vv {R}$ is directed path from $u$ to $v$,
$(f)$ $\vv {P}:(u,w)$-directed path, $\vv {Q}:(w,v)$-directed path, $\vv {P} \circ \vv {Q}$ is $(u,v)$-directed path.
}
\label{fig_4_case_2}
\end{figure}

\noindent \textbf{Case 2(\textit{a})} : $j \geq m$ and $i+k \geq l$. 

\noindent \textbf{Path from $v$ to $u$:}
Let $\vv {P}$ be the directed path from $v$ along the $h$-level $l$ till it reaches vertex $w$ in the $a$-level $j$. Let $\vv {Q}$ be the directed path from $w$ along the $a$-level $j$ till it reaches vertex $u$. Then $\vv {P} \circ \vv {Q}$ is a directed path from $v$ to $u$ of length $(l-i)+(j-m) \leq k+j-m = r-(i+m) \leq r$. 
See Figure $\ref{fig_4_case_2}(a)$ .

\noindent\textbf{Path from $u$ to $v$:}
When $i+k \geq m$, let $\vv {P}$ be the directed path from $u$ along the $a$-level $j$ till it reaches vertex $w$ in $h$-level $0$. Let $\vv {Q}$ be the directed path from $w$ along $h$-level $0$ till it reaches $x$ at $o$-level $n$. Let $\vv {R}$ be the directed path from $x$ along $o$-level $n$ till it reaches vertex $v$. 
Then $\vv {P} \circ \vv {Q} \circ \vv {R}$ is a directed path from $v$ to $u$ of length 
$i+(j-(l+m))+ l = r-(k+m) < r$. 
See Figure $\ref{fig_4_case_2}(b)$.
When $i+k < n$, let $\vv {P}$ be the directed path from $u$ along the $a$-level $j$ till it reaches vertex $w$ in $h$-level $0$; let $\vv {Q}$ be the directed path from $w$ along $o$-level $n$ till it reaches $v$.
Then $\vv {P} \circ \vv {Q} $ is a directed path from $v$ to $u$ of length at most of
$i+(j-m) = r-(k+m) < r$. 
See Figure $\ref{fig_4_case_2}(c)$.

\noindent\textbf{\\ Case 2(\textit{b})} : $j \geq m$ and $i+k < l$.

\noindent\textbf{Path from $v$ to $u$:}
Let $\vv {P}$ be the directed path from $v$ along the $h$-level $l$ till it reaches vertex $w$ in the $o$-level $0$. Let $\vv {Q}$ be the directed path from $w$ along the $a$-level $j$ till it reaches vertex $x$ in the $a$-level j.
Let $\vv {R}$ be the directed path from $x$ along $a$-level $j$ till it reaches vertex $u$. 
Then $\vv {P} \circ \vv {Q} \circ \vv {R}$ is a directed path from $v$ to $u$ of length 
$n+(l-(i+k))+ k = r-(m+i) < r$. 
See Figure $\ref{fig_4_case_2}(d)$ .

\noindent\textbf{Path from $u$ to $v$:}
When $i+k \geq m$, let $\vv {P}$ be the directed path from $u$ along the $a$-level $j$ till it reaches vertex $w$ in $h$-level $0$. Let $\vv {Q}$ be the directed path from $w$ along $h$-level $0$ till it reaches $x$ at $o$-level $n$. Let $\vv {R}$ be the directed path from $x$ along $o$-level $n$ till it reaches vertex $v$. 
Then $\vv {P} \circ \vv {Q} \circ \vv {R}$ is a directed path from $v$ to $u$ of length 
$i+(j-(m+l))+ l = r-(k+m) < r$. 
See Figure $\ref{fig_4_case_2}(e)$.
When $i+k < n$, let $\vv {P}$ be the directed path from $u$ along the $a$-level $j$ till it reaches vertex $w$ in $h$-level $0$; let $\vv {Q}$ be the directed path from $w$ along $o$-level $n$ till it reaches $v$.
Then $\vv {P} \circ \vv {Q} $ is a directed path from $v$ to $u$ of length at most of
$i+(j-m) = r-(k+m) < r$. 
See Figure $\ref{fig_4_case_2}(f)$.

\noindent\textbf{\\Case 3}: Both the vertices $u$ and $v$ lie in the same $h$-level, $a$-level or $o$-level. As long as $u$ and $v$ are not both vertices of degree 2 in $T_r$, arguments similar to that of Case 1 show that $\vv {d}(u, v)$ and $\vv {d}(v, u)$ are at most $r$. Suppose that both $u$ and $v$ are  of degree 2 in $T_r$, then if $\vv{d}(u, v) = r$, then $\vv{d}(v, u) = r+1$. Thus the oriented graph $T_r$ has diameter $r+1$. 

\bigskip
\noindent
The above analysis of the algorithm for triangular grid graphs  and Lemma \ref{r+1} yield the following result.

\begin{theorem}
Let $G$ be the triangular $T_{r}^{}, r>2$. Then $OD(G)=r+1$.
\end{theorem}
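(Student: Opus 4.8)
The plan is simply to sandwich $OD(G)$ between matching lower and upper bounds. The lower bound is immediate: Lemma~\ref{r+1} gives $OD(T_r)\ge r+1$ for every $r\ge 2$, hence in particular for $r>2$. For the upper bound I would invoke the orientation produced by the algorithm above --- reverse the three level-$0$ boundary lines, and direct every interior horizontal line left-to-right, every acute line top-to-bottom, and every obtuse line bottom-to-top --- together with the accompanying case analysis, which shows that this orientation is strongly connected and has diameter exactly $r+1$. Combining the two bounds yields $OD(G)=r+1$.

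The only real content is therefore the verification that the constructed orientation has diameter $r+1$, and I would present it exactly as in the case analysis preceding the theorem. Fix vertices $u=(i,j,k)$ and $v=(l,m,n)$ with $i<l$; split according to whether $j<m$ or $j\ge m$, and then according to whether the relevant ``corner'' (turn) vertex lies strictly inside the grid or must instead be reached via a boundary line --- these are precisely Cases 1(a), 1(b), 2(a), 2(b). In each subcase one routes a directed path that travels monotonically along at most three consecutive level-lines, and bounds its length using the defining relation $i+j+k=r$, obtaining a quantity of the form $r-(\text{nonnegative})\le r$ for both $\vv{d}(u,v)$ and $\vv{d}(v,u)$. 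Finally one handles Case~3, where $u$ and $v$ share a level-line: the bound $r$ still holds unless $u$ and $v$ are both degree-$2$ corners, in which case one direction has length $r$ while the reverse direction is forced to length $r+1$ (it must close the cycle via the reversed boundary lines). Since no ordered pair exceeds $r+1$ and some pair attains it, the oriented graph has diameter exactly $r+1$.

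I expect the main obstacle to be purely bookkeeping rather than conceptual: one must check in every subcase that the turn vertices actually exist on the claimed level-lines --- this is exactly what the side inequalities such as $i+j\ge l$ or $i+k<l$ encode --- that the chosen path is consistent with the edge orientations throughout, and in particular never needs to traverse a reversed boundary line in the forbidden direction, and that both ordered pairs $(u,v)$ and $(v,u)$ are covered in each case. Pinning the diameter down to exactly $r+1$, rather than merely $\le r+1$, rests entirely on the exceptional configuration in Case~3, so that subcase deserves the most care.
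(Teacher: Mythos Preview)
Your proposal is correct and follows exactly the paper's approach: combine the lower bound from Lemma~\ref{r+1} with the upper bound coming from the explicit orientation and its Case~1(a)--2(b), Case~3 analysis, noting that the diameter value $r+1$ is attained only by the pair of degree-$2$ corner vertices. There is nothing to add.
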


\jyoti{
\section{Upper Bound on the Oriented Diameter of a Planar Triangulation}
\label{sec:pt}

In this section we provide an $n/2+O(\sqrt{n})$  upper bound on the oriented diameter of a planar triangulation. The idea is to use edge orientations similar to the one we used for the triangular grid, i.e., three outgoing edges at each internal vertex with interleaving incoming edges  and a clockwise orientation on the outer face. In fact, every planar triangulation is known to admit such an orientation for its internal vertices, which is obtained by via a `Schnyder realizer'. Since we use this concept extensively, we first briefly review some preliminary definitions and notation related to planar graphs and Schnyder realizer.

\subsection{Planar Graphs and  Schnyder Realizer}
\label{sec:pre}


A \emph{planar graph} is a graph that can be drawn on the Euclidean plane such that no two edges cross except possibly at their common endpoint. A \emph{plane graph} $G$ is a planar graph with a fixed planar embedding on the plane.  
 $G$ delimits the plane into connected regions called faces.
 The unbounded face is the outer face of $G$ and all the other faces are the inner faces of $G$.
 $G$ is called \emph{triangulated} 
 if every face (including the outerface) 
 of $G$ contains exactly three vertices on its boundary. The vertices on the outer face of $G$ are called the \emph{outer vertices} and
 all the remaining vertices are called the \emph{inner vertices}. The edges on the outer face are called the \emph{outer edges} of $G$.

Let $G=(V,E)$ be a triangulated plane graph with the outer vertices $v_m,v_r$ and $v_l$ in clockwise order on the outer face. Then the internal edges of $G$  can be directed in a way such that every inner vertex has three directed  paths which are vertex disjoint (except that they start at at the common vertex $v$) and end at the three outer vertices~\cite{S90}. In other words, there are three directed trees $T_m$, $T_r$ and $T_l$ rooted at $v_m,v_r$ and $v_l$, respectively, that span all the internal vertices of $G$. Let \parthiban{us} denote the edges of  $T_m,T_r$ and $T_l$ as the $m$-edges, $r$-edges and $l$-edges, respectively. Then the edges at every internal vertex $v$ of $G$ are directed in a certain order, as shown in Figure~\ref{figure:cann}(c). The trees $T_m,T_r$ and $T_l$ are referred to as \emph{Schnyder realizer}.  Figure~\ref{figure:cann}(a) illustrates a Schnyder realizer and Figure~\ref{figure:cann}(b) shows the tree $T_l$ separately.}

\begin{figure}[h]
\centering
\includegraphics[width=\textwidth]{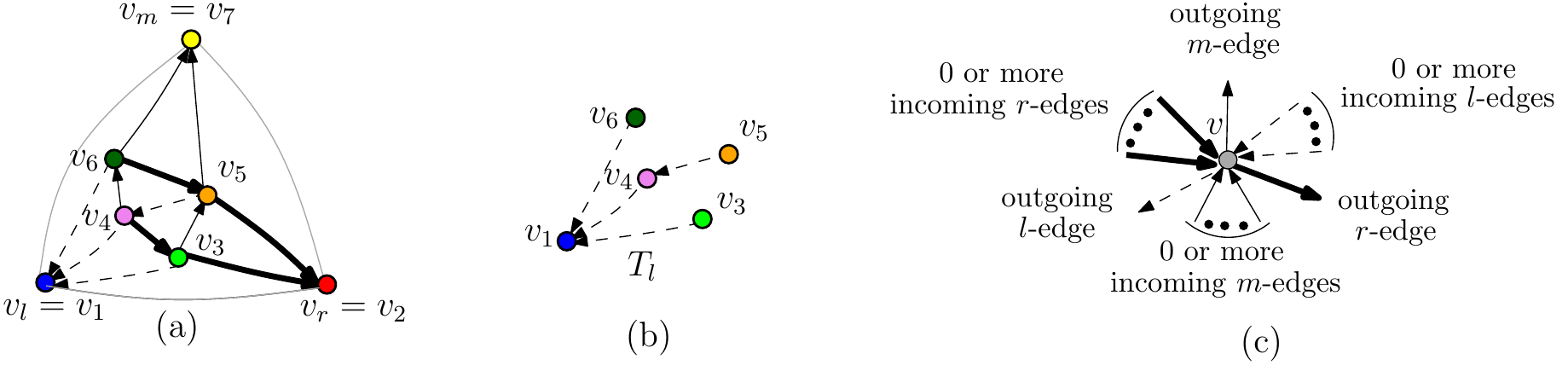}
\caption{(a) A Schnyder realizer. (b) Illustration for $T_l$. (c) Illustration for the edge orientations around a vertex. } 
\label{figure:cann} 
\end{figure}


There exists a Schnyder realizer $T_l,T_r,T_m$ of $G$ with $\leaf(T_l)+\leaf(T_r)+\leaf(T_m)= 2n-5-\delta_0$
\cite{BonichonSM02}, where $0\le \delta_0 \le \lfloor (n-1)/2 \rfloor$ is the number of cyclic faces.
 It is known that there exists a Schnyder realizer where one tree  has at least $\lceil$
 $ {(n+1)}/{2}\rceil$ leaves and such a realizer 
 can be computed in linear time \cite{ZhangH05}.

 \jyoti{
\subsection{An Initial Upper Bound}
\label{early}
We first give an algorithm to compute a $\lceil {3(n-1)}/{4} \rceil + 2$ upper bound on oriented diameter of $G$, and in the subsequent section we improve the bound to  $n/2+O(\sqrt{n})$. 

\bigskip
\noindent
\textbf{Algorithm:} We first compute a Schnyder realizer $T_l,T_r,T_m$  such that one tree (assume without loss of generality that $T_m$) has at least $\lceil \frac{(n+1)}{2}\rceil$ leaves~\cite{ZhangH05}.  
We now  assign the edge orientations as follows (Figure~\ref{figure:algo}(a)).

\bigskip
\noindent
Orient the outer edges of $G$ in clockwise order:   $\vv{(v_m,v_r)}$, $\vv{(v_r,v_l)}$ and $\vv{(v_l,v_m)}$. 

\noindent
Orient the $m$-edges such that each vertex of $T_m$ points at  its ancestor in $T_m$.

\noindent
Orient the $l$-edges such that each vertex of $T_l$ points at  its  descendent in $T_l$.

\noindent
Orient the $r$-edges such that each vertex of $T_r$ points at  its  descendent in $T_r$.
\bigskip

\begin{figure}[h]
\centering
\includegraphics[width=\textwidth]{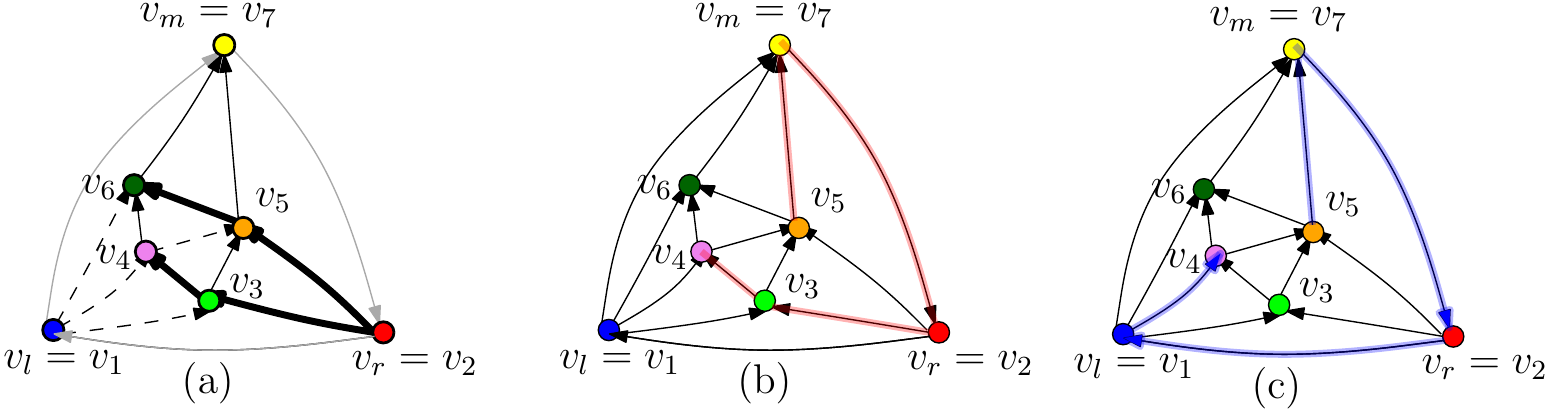}
\caption{(a) Illustration for the algorithm of Section~\ref{early}. (b)--(c) The paths  $\vv{Q}$ and $\vv{Q'}$, which are shown in red and blue, respectively. Here $v=v_5$ and $w = v_4$. } 
\label{figure:algo} 
\end{figure}

\noindent
\textbf{Proof of correctness:} We now show that for every pair of vertices $u,w$ in $G$, their shortest path length is bounded by at most $\lceil {3(n-1)}/{4} \rceil + 2$.

\bigskip
\noindent
\textbf{Case 1 (both $u,w$ are internal vertices of $G$)}:  Let $\vv{P}_{u,v_m}$ be the path that starts at $u$ and follows the $m$-edges to reach $v_m$. Let $\vv{P}_{v_r,w}$  be the path that starts at $v_r$  and follows the $r$-edges to reach $w$. Similarly, let  $\vv{P}_{v_l,w}$ be the path that starts at $v_l$ and follows the $l$-edges to reach $w$.  

Consider now  two  $u$ to $w$ paths (Figure~\ref{figure:algo}(b)--(c)). One is the path $\vv{Q}$ that  first travels along $\vv{P}_{u,v_m}$, then along the edge $\vv{(v_m,v_r)}$ and finally, along $\vv{P}_{v_r,w}$. The other is the path $\vv{Q'}$ that  first travels along $\vv{P}_{u,v_m}$, then along the edges $\vv{(v_m,v_r)}$ and $\vv{(v_r,v_l)}$, and finally, along $\vv{P}_{v_l,w}$. We now show \parthiban{that} at least one of these two paths are of length at most  $\lceil {3(n-1)}/{4} \rceil + 2$.

Since $T_m$ has at least $\lceil {(n+1)}/{2}\rceil$ leaves, the length $\beta$ of $\vv{P}_{u,v_m}$ is at most $\lceil n-\frac{(n+1)}{2} \rceil =  \lceil {(n-1)}/{2} \rceil$. Since the paths  $\vv{P}_{v_r,w}$ and $\vv{P}_{v_l,w}$ are vertex disjoint (except that they start at $w$), the sum of their lengths is at most $(n-\beta-1)$. Here the $-1$ term represents that we can safely skip the vertex $v_m$.  We now can assume without loss of generality that  $\vv{P}_{v_l,w}$ has at most  $\lceil (n-\beta-1)/2\rceil$ edges. Therefore, the length of $\vv{Q'}$ is bounded by at most 
\begin{align*}
    &\beta + \lceil (n-\beta-1)/2\rceil+2 \\
    &= \lceil {(n-1)}/{2} + \beta/2\rceil   + 2, \text{ where  } \beta\le \lceil {(n-1)}/{2} \rceil \\
    &=  \lceil {3(n-1)}/{4} \rceil + 2
\end{align*}

Similarly, we can find a $w$ to $u$ path of length  at most $ \lceil {3(n-1)}/{4} \rceil + 2$ by swapping the \parthiban{roles} of $u$ and $w$ in the argument above.

\bigskip
\noindent
\textbf{Case 2 (at least one of $u,w$ is an outer vertex of $G$)}: If both $u,w$ are outer vertices then they can reach each other following at most two outer edges. If exactly one is an outer vertex, then without loss of generality assume $u$ be the inner vertex and $w$ be the outer vertex. To compute the $u$ to $w$ path we first travel to $v_m$ via $m$-edges and then reach $w$ by visiting outer edges. To compute the $w$ to $u$ path, we consider two paths (as we did in Case 1): one that goes through $v_r$ and the other that goes through $v_l$. In both scenarios, we can use the analysis of Case 1 to find a path of length at most $ \lceil {3(n-1)}/{4} \rceil + 2$.

\subsection{An Improved Upper Bound}
In this section we improve the upper bound to  $n/2+O(\sqrt{n})$ by leveraging the concept of planar separator.    Let $G $ be a    triangulated plane graph with $n$ vertices. 
 A  simple cycle $C$ in $G$ is called a \emph{simple cycle separator} if the  interior  and the exterior of $C$
 each contains at most $2n/3$ vertices.  
 Every planar graph admits a simple cycle separator of size $O(\sqrt{n})$~\cite{DBLP:journals/jcss/Miller86,DBLP:journals/acta/DjidjevV97,DBLP:conf/sigal/GazitM90}. 


Let $C$ be a simple cycle separator of size  $O(\sqrt{n})$ in $G$. Let $G_{in}$  be the graph induced by the vertices  of $C$ and the vertices inside $C$. We create a triangulated graph $G'_{in}$ by adding a vertex $s_{out}$ to all the vertices of $C$. Similarly, we define  $G_{out}$ \parthiban{to} be the graph induced by the vertices  of $C$ and the vertices outside $C$, and create   $G'_{out}$ by adding a vertex $s_{in}$ to all the  vertices of $C$. We then take a planar embedding of $G'_{out}$ such that \parthiban{$s_{in}$} lies on the outerface. Figure~\ref{figure:sep} illustrates a separator of $G$ and the corresponding $G'_{in}$ and  $G'_{out}$. 

\begin{figure}[h]
\centering
\includegraphics[width=\textwidth]{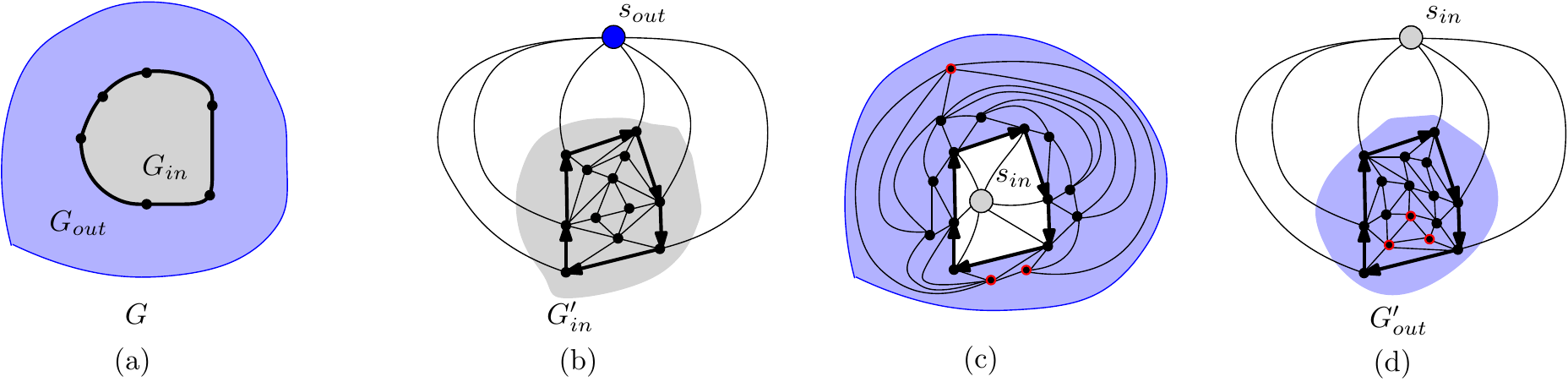}
\caption{(a) Illustration for a simple cycle separator. (b)--(d) The construction of the plane triangulations $G'_{in}$ and $G'_{out}$. The red vertices in (c) are the outer vertices and they become inner vertices in (d). } 
\label{figure:sep} 
\end{figure}

Let $n_{in}$ and $n_{out}$ be the number of vertices of $G'_{in}$ and $G'_{out}$, respectively. We now compute an orientation $\sigma$ for the edges of $G'_{in}$ such that the shortest path distance between every pair of vertices is at most $\lceil {3(n_{in}-1)}/{4} \rceil + 2$ as we did in Section~\ref{early}. We then remove the orientation of the edges on $C$ and reorient them clockwise. Let the resulting orientation be $\sigma'$. We now show that  using $\sigma'$ instead of $\sigma$ and avoiding $s_{out}$ increases the shortest path distance between a pair of vertices in $G_{in}$ by at most $O(\sqrt{n})$.  

We consider two cases depending on whether the tree $T$ rooted at $s_{out}$   has the maximum number of leaves in the Schnyder realizer of $G'_{in}$.

\bigskip
\noindent
\textbf{Case 1 ($T$ has the maximum number of leaves):} In Section~\ref{early} we constructed a path $\vv{P}$ from an inner vertex $u$ to another inner vertex $w$ such that it first moves from $u$ to an outer vertex and then traverses along the outer cycle and finally, travels towards $w$. Such a path can also be constructed using $\sigma$. However, in $\sigma'$, we need to restrict ourselves within the edges of $G_{in}$. Let $x$ be a vertex of $C$ and the first such vertex that we encounter while travelling from $v$ to $w$ following $\vv{P}$. Similarly,  let $y$ be a vertex of $C$ and the last such vertex that we encounter while travelling from $v$ to $w$ following $\vv{P}$. We now replace the subpath from $x$ to $y$ using a clockwise path on $C$. This results into a path of length at most $\lceil {3(n_{in}-1)}/{4} \rceil + O(\sqrt{n})$ that uses the orientation of $\sigma'$. We can find a $w$ to $v$ path of length at most $\lceil {3(n_{in}-1)}/{4} \rceil + O(\sqrt{n})$ using the same argument by swapping the \parthiban{roles} of $u$ and $w$. It is straightforward to observe that the argument holds even when one of $u$ and $w$, or both lie on the cycle separator. 

\bigskip
\noindent
\textbf{Case 2 ($T$ does not have the maximum number of leaves):} Without loss of generality assume that the tree rooted at an outer vertex $v\not= s_{out}$ has the maximum number of leaves in the Schnyder realizer of $G'_{in}$. The argument here is the same as that of Case 1. To reach from an inner vertex $u$ to another inner vertex $w$, we take a path $\vv{P}$ from  $u$ to $v$ and then traverse along the outer cycle and finally, travel towards $w$. Since we need to avoid $s_{out}$, we can leverage the clockwise cycle $C$. 
 Define the vertices $x$ and $y$ similar to that of Case 1. We then replace the subpath of $\vv{P}$ from $x$ to $y$ using a clockwise path on $C$. This results into a path of length at most $\lceil {3(n_{in}-1)}/{4} \rceil + O(\sqrt{n})$ that uses the orientation of $\sigma'$. We can find a $w$ to $v$ path using the same argument by swapping the role of $u$ and $w$. It is straightforward to observe that the argument holds even when one of $u$ and $w$ or both lie on the cycle separator. 

\bigskip
We now compute an orientation  $\sigma''$ for the edges of $G'_{out}$ in the same way as we did for $G'_{in}$. In the following we show that the edge orientations of $G$ obtained by taking the edge orientations from $\sigma'$  and $\sigma''$ \parthiban{ensure} an oriented diameter of $ {n}/{2}   + O(\sqrt{n})$. 

Consider a pair of vertices $u$ and $w$ in $G$. If both belong to $G_{in}$, then they can be reached from each other using a path of length $\lceil {3(n_{in}-1)}/{4} \rceil + O(\sqrt{n}) = n/2+ O(\sqrt{n})$. The same argument holds if they both belong to $G_{out}$. 

Assume now without generality that $u$ belongs to $G_{in}$ and $w$ belongs to $G_{out}$. To find a $u$ to $w$ path, we first traverse the path $P'$ that goes from $u$ to a vertex on $C$ (using $\sigma'$) and then traverse clockwise along $C$, and finally take the path $P''$ that goes from a vertex of $C$ to $w$ (using $\sigma''$). By the analysis of Section~\ref{early} and then by leveraging $\sigma'$, we obtain the length of $P'$ to be at most $\lceil(n_{in}-1)/2\rceil + O(\sqrt{n})$. By the analysis of Section~\ref{early} we know that  there are two vertex disjoint paths in $G'_{out}$ that start at $C$ and \parthiban{reach} $w$. Hence the length of one of these paths is at most  $\lceil n_{out}/2\rceil$. After considering $\sigma''$, the  corresponding path $P''$ would have a length of $\lceil n_{out}/2\rceil+O(\sqrt{n})$. Therefore, the length of the $u$ to $w$ path is upper bounded by at most $n/2+O(\sqrt{n})$. We can construct a $w$ to $u$ path by swapping the role of $u$ and $w$.

The computation of the simple cycle separator takes $O(n)$ time~\cite{DBLP:journals/jcss/Miller86}. The computation of the required Schneider realizer also takes $O(n)$ time~\cite{ZhangH05}. Hence it is straightforward to implement the algorithm in linear time. The following theorem summarizes the result of this section.
\begin{theorem}\label{th:od}
The oriented diameter of a planar triangulation with $n$ vertices is $n/2+O(\sqrt{n})$ and such an orientation can be computed in $O(n)$ time.
\end{theorem}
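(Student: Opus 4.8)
The plan is to combine the $\lceil 3(n-1)/4\rceil + 2$ bound of Section~\ref{early} with a single application of a planar cycle separator, so that the ``$3/4$'' gets replaced by ``$1/2$'' because each of the two resulting pieces has only roughly $2n/3$ vertices. First I would invoke the planar separator theorem to obtain a simple cycle separator $C$ of size $O(\sqrt n)$ whose interior and exterior each contain at most $2n/3$ vertices, and form the two triangulated graphs $G'_{in}$ and $G'_{out}$ by contracting everything on one side of $C$ to a single apex vertex ($s_{out}$ for the inside graph, $s_{in}$ for the outside graph) joined to all of $C$. Applying the Section~\ref{early} algorithm to $G'_{in}$ yields an orientation $\sigma$ in which every pair is at distance at most $\lceil 3(n_{in}-1)/4\rceil + 2$, where $n_{in}\le 2n/3 + O(\sqrt n)$; the same holds for $G'_{out}$.

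The crux is to turn $\sigma$ into an orientation $\sigma'$ of $G_{in}$ alone (i.e., forgetting $s_{out}$) without inflating distances by more than $O(\sqrt n)$. I would reorient the edges of $C$ clockwise and keep every other edge as in $\sigma$. The key observation is structural: in the Section~\ref{early} argument every $u\to w$ path has the shape ``climb one Schnyder tree to an outer vertex, walk along the outer triangle, descend another Schnyder tree.'' Whether the root with the largest leaf count is $s_{out}$ (Case~1) or a genuine outer vertex of $G'_{in}$ (Case~2), such a path meets the cycle $C$; letting $x$ be the first and $y$ the last vertex of $C$ that it visits, I replace the $x$--$y$ portion by the clockwise arc of $C$, which costs only $|C| = O(\sqrt n)$ extra edges and never uses $s_{out}$. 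Checking both cases, and checking that the replacement still works when $u$ or $w$ (or both) already lies on $C$, gives: under $\sigma'$ every pair in $G_{in}$ is within $\lceil 3(n_{in}-1)/4\rceil + O(\sqrt n)$, and moreover from any vertex of $G_{in}$ one can reach (and be reached from) some vertex of $C$ within $\lceil (n_{in}-1)/2\rceil + O(\sqrt n)$ steps. Define $\sigma''$ analogously on $G_{out}$.

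Finally I would glue $\sigma'$ and $\sigma''$ to orient all of $G$: they agree on $C$ (both clockwise), so the union is well defined, and it is strongly connected since each piece is strongly connected and they share the directed cycle $C$. For a pair $u,w$ on the same side, the bound is $\lceil 3(n_{in}-1)/4\rceil + O(\sqrt n) = n/2 + O(\sqrt n)$, using $n_{in}\le 2n/3 + O(\sqrt n)$. For $u\in G_{in}$ and $w\in G_{out}$, route $u$ to a vertex of $C$ via $\sigma'$ (at most $\lceil (n_{in}-1)/2\rceil + O(\sqrt n)$ steps), then clockwise along $C$ ($O(\sqrt n)$ steps), then from a vertex of $C$ to $w$ via $\sigma''$ (one of the two vertex-disjoint Schnyder paths into $w$ has at most $\lceil n_{out}/2\rceil$ edges, so at most $\lceil n_{out}/2\rceil + O(\sqrt n)$ steps); since $n_{in}+n_{out} = n + |C| + 2 = n + O(\sqrt n)$, the total is $n/2 + O(\sqrt n)$, and the $w\to u$ direction is symmetric. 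The running time is $O(n)$: the cycle separator is computed in linear time, the required Schnyder realizer with a tree of at least $\lceil (n+1)/2\rceil$ leaves is computed in linear time, and everything else is local surgery.

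The step I expect to be the main obstacle is the middle paragraph — proving that discarding the apex vertex and rerouting through the clockwise cycle $C$ inflates every pairwise distance by only an additive $O(\sqrt n)$, uniformly over the case analysis of Section~\ref{early} and over which Schnyder tree is largest, while simultaneously preserving strong connectivity. Everything else is either a black-box invocation (the separator theorem, the leaf-heavy Schnyder realizer) or the Section~\ref{early} arithmetic with $n$ replaced by $n_{in}$ or $n_{out}$.
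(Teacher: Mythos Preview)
Your proposal is correct and follows essentially the same approach as the paper: a single simple-cycle separator, apex-augmented triangulations $G'_{in}$ and $G'_{out}$ oriented via the Section~\ref{early} Schnyder-based scheme, reorienting $C$ clockwise and rerouting the ``climb--outer-walk--descend'' paths through $C$ (with the same two-case split on whether the leaf-heavy tree is rooted at the apex), and the same arithmetic for same-side and cross-side pairs. You have even identified the rerouting step as the crux, exactly where the paper spends its effort.
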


\section{Lower Bound}
The lower bound is determined by the nested triangles graph $G_n$, which is defined as follows. 

For $n=3$, the graph $G_3$ is a cycle $u_1,v_1,w_1$ of three vertices. For $n=3m$, where $m>1$ is a positive integer, $G_n$ is obtained by enclosing $G_{n-1}$ inside a cycle $u_{n/3},v_{n/3},w_{n/3}$ of three vertices and then adding the edges $(u_{n/3-1},u_{n/3})$, $(v_{n/3-1},v_{n/3})$, and $(w_{n/3-1},w_{n/3})$. See Figure~\ref{figure:nested}(a).

\begin{figure}[h]
\centering
\includegraphics[width=\textwidth]{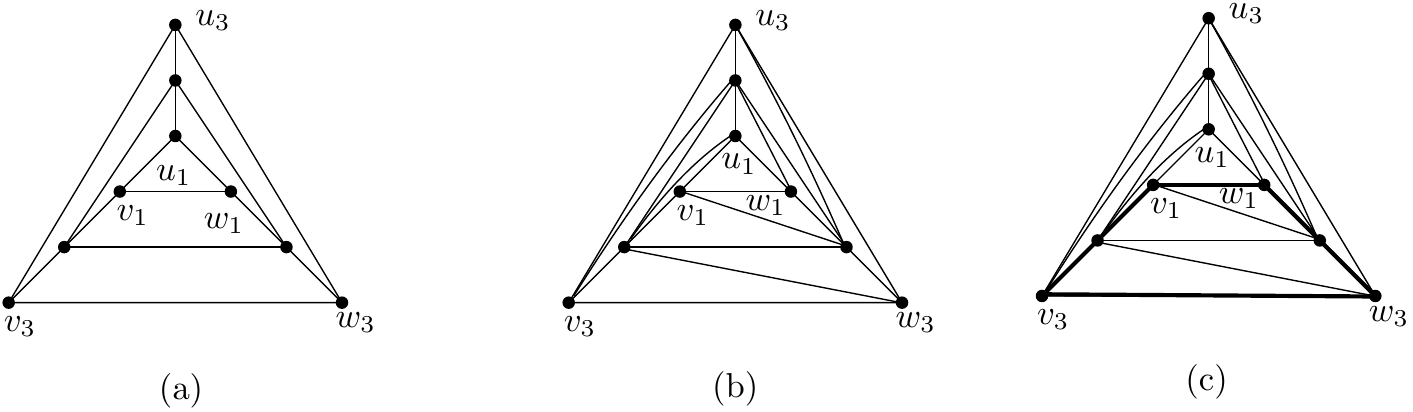}
\caption{(a) Illustration for a nested triangles graph. (b) A triangulated nested triangles graph. (c) A shortest cycle of $2n/3$ vertices through $w_1$ and $v_{n/3}$.}
\label{figure:nested}
\end{figure}

We triangulate the graph $G_n$ by adding the edges $(v_i,w_{i+1}), (v_i,u_{i+1}), (w_i,u_{i+1})$, where $1\le i <n/3$. See Figure~\ref{figure:nested}(b). It is now straightforward to employ an induction on $n$ to show  that the shortest path length between $w_1$ and $v_{n/3}$ is $n/3$. Therefore, any orientation of the graph, a directed cycle through  $w_1$ and $v_{n/3}$ must be of length at least $2n/3$. Hence we obtain the following theorem.

\begin{theorem}
For every $n=3m$, where $m$ is a positive integer, there exists a planar triangulation with oriented diameter at least $n/3$.
\end{theorem}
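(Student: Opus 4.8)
The plan is to verify first that the triangulated nested triangles graph $G_n$ is a genuine planar triangulation on $n$ vertices, and then to pin down a single pair of vertices whose mutual distance is already $n/3$, from which the oriented-diameter bound will follow immediately from strong connectivity. For the first part I would observe that adding the spokes $(u_i,u_{i+1}),(v_i,v_{i+1}),(w_i,w_{i+1})$ cuts the annular region between the $i$-th and $(i+1)$-st triangle into three quadrilaterals (with corner sequences $u_i,v_i,v_{i+1},u_{i+1}$; $v_i,w_i,w_{i+1},v_{i+1}$; and $w_i,u_i,u_{i+1},w_{i+1}$), and that the three edges $(v_i,u_{i+1}),(v_i,w_{i+1}),(w_i,u_{i+1})$ added in the triangulation step are exactly one diagonal of each such quadrilateral; hence every bounded face, together with the innermost face $u_1v_1w_1$ and the outer face $u_{n/3}v_{n/3}w_{n/3}$, is a triangle, and a quick count gives $3n-6$ edges, as a planar triangulation must have.

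The core of the argument is to show $d_{G_n}(w_1,v_{n/3})=n/3$. The upper bound is witnessed by the path $w_1,v_1,v_2,\dots,v_{n/3}$, which uses the triangle edge $w_1v_1$ followed by the spokes $v_iv_{i+1}$ and has length $1+(n/3-1)=n/3$. For the matching lower bound I would avoid the path-by-path case analysis suggested by the phrase ``induction on $n$'' and instead introduce a $1$-Lipschitz potential $\phi\colon V(G_n)\to\mathbb Z$ defined by $\phi(u_j)=\phi(w_j)=j-1$ and $\phi(v_j)=j$, where $j$ indexes the triangles from the inside out. One then checks $|\phi(x)-\phi(y)|\le 1$ for each of the finitely many edge types: the three triangle edges of layer $j$ (differences $1,1,0$), the three spokes between consecutive layers (each difference $1$), and the three triangulating edges $(v_j,u_{j+1}),(v_j,w_{j+1}),(w_j,u_{j+1})$ (differences $0,0,1$). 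Since $\phi(w_1)=0$ and $\phi(v_{n/3})=n/3$, every walk from $w_1$ to $v_{n/3}$ has at least $n/3$ edges, giving $d_{G_n}(w_1,v_{n/3})=n/3$. (The induction of the statement works too: the base case is the triangle $G_3$, and the inductive step amounts to arguing that a shortest $w_1$-to-$v_{n/3}$ path gains nothing by entering and leaving the newly added outer triangle; the potential packages all such detours at once.)

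The conclusion is then a two-line reduction. Fix an arbitrary strongly connected orientation $D$ of $G_n$. By strong connectivity there is a directed $w_1\rightsquigarrow v_{n/3}$ path and a directed $v_{n/3}\rightsquigarrow w_1$ path in $D$; each is in particular an undirected walk in $G_n$, hence of length at least $d_{G_n}(w_1,v_{n/3})=n/3$. Therefore $\max\{\vv{d}(w_1,v_{n/3}),\vv{d}(v_{n/3},w_1)\}\ge n/3$, so the diameter of $D$ is at least $n/3$; as $D$ was arbitrary, $OD(G_n)\ge n/3$, proving the theorem.

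The only step I expect to require real care is the distance lower bound: one must confirm the $1$-Lipschitz property for \emph{every} edge of $G_n$, and in particular for the ``diagonal'' triangulating edges, since those are precisely the edges that could conceivably provide a shortcut past a layer, as well as make sure the layer indexing is consistent at the innermost triangle. Planarity, the edge count, and the passage from a strongly connected orientation to the underlying undirected distance are all routine.
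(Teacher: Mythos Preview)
Your proposal is correct and follows the paper's approach: the same triangulated nested-triangles graph, the same antipodal pair $w_1,v_{n/3}$, and the same passage from the undirected distance to the oriented-diameter lower bound via strong connectivity. The only variation is that the paper asserts $d_{G_n}(w_1,v_{n/3})=n/3$ by a one-line induction on $n$, whereas you supply an explicit $1$-Lipschitz potential; this is a clean substitute for (indeed a repackaging of) that inductive step, not a genuinely different route.
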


 

\section{Planar Weighted Oriented Diameter}

In this section we consider the weighted oriented diameter. Formally, given an edge weighted graph, the \emph{weighted oriented diameter} is the minimum weighted diameter over all of its strongly connected orientations.

If the edge weights are small, then one can find an orientation with small weighted diameter. The following corollary is a direct consequence of Theorem~\ref{th:od}. 

\begin{corollary}
Let $G$ be an edge weighted planar graph with $n$ vertices. Assume that the weight of each edge is at most ${(1+\epsilon)}/{n}$, where $0\le \epsilon<1$, and the sum of all weights in $1$. Then in linear time, one can compute an orientation of $G$ with weighted oriented diameter at most $\frac{(1+\epsilon)}{2} + \frac{1}{O(\sqrt{n})}$.
\end{corollary}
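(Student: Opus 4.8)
The plan is to derive this as an essentially immediate corollary of Theorem~\ref{th:od}, using the fact that the weighted diameter of an orientation is bounded by the (unweighted) diameter times the maximum edge weight. First I would observe that $G$ is a planar graph with $n$ vertices but need not be a triangulation; however, we may add edges to obtain a planar triangulation $\widehat{G}$ on the same vertex set, apply Theorem~\ref{th:od} to $\widehat{G}$ to get a strong orientation of $\widehat{G}$ with (unweighted) diameter at most $n/2 + O(\sqrt{n})$, and then restrict attention to the orientation of the original edges of $G$. The subtlety here is that deleting the added edges could destroy strong connectivity or increase distances, so the cleaner route is to instead assign arbitrary orientations to the added edges but give them weight $0$ (or to note that the added edges do not help and simply keep the orientation on $G$'s edges induced by a strong orientation of $\widehat{G}$, observing that the bound in Theorem~\ref{th:od} is realized by paths that—after the separator construction—can be taken within $G$ itself). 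For a self-contained plan I would phrase it as: orient the edges of $G$ as in Theorem~\ref{th:od} applied to any triangulation $\widehat{G} \supseteq G$, keeping orientations only on $E(G)$; then every ordered pair $u,w$ is joined by a directed path in $\widehat{G}$ of at most $n/2 + O(\sqrt{n})$ edges, and one argues this path may be assumed to avoid the added edges.

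The key computation is then: each edge has weight at most $(1+\epsilon)/n$, and a shortest directed path has at most $n/2 + O(\sqrt{n})$ edges, so its total weight is at most
\begin{align*}
\left(\frac{n}{2} + O(\sqrt{n})\right)\cdot \frac{1+\epsilon}{n} &= \frac{1+\epsilon}{2} + \frac{(1+\epsilon)\cdot O(\sqrt{n})}{n} \\
&= \frac{1+\epsilon}{2} + \frac{1}{O(\sqrt{n})},
\end{align*}
where in the last line we absorb the constant factor $(1+\epsilon) < 2$ into the $O(\cdot)$ notation (writing $\frac{1}{O(\sqrt n)}$ to mean a quantity of order $1/\sqrt n$). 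This gives the claimed weighted oriented diameter bound. The linear running time is inherited directly from the $O(n)$ bound in Theorem~\ref{th:od} (triangulating $G$ also takes linear time), so nothing new is needed there.

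The main obstacle—and the only place where care is genuinely required—is making sure the orientation we output is a strong orientation of $G$ and not merely of the triangulation $\widehat{G}$. If the theorem's guarantee is only that $\widehat{G}$ is strongly oriented, restricting to $E(G)$ could in principle disconnect things. The resolution is to recall how Theorem~\ref{th:od} is proved: the separator-based construction routes every directed path either inside $G_{in}$ or inside $G_{out}$ together with the clockwise cycle $C$, and the Schnyder-realizer orientation within each piece only ever uses edges of the (sub)triangulation that, tracing back, correspond to real edges of $G$ when $G$ is already a triangulation; when $G$ is not a triangulation one should instead state the corollary for $G$ with the understanding that we first triangulate and the hypothesis "sum of all weights is $1$" refers to $E(G)$, assigning the dummy edges weight $0$ so they do not affect the weighted diameter while preserving strong connectivity. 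I would make this precise in one sentence and otherwise present the proof as the short arithmetic above.
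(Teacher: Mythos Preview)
Your approach is correct and matches the paper's intent: the paper offers no separate proof and simply declares the corollary ``a direct consequence of Theorem~\ref{th:od}'', which amounts to exactly your arithmetic of multiplying the edge-count bound $n/2 + O(\sqrt{n})$ by the maximum edge weight $(1+\epsilon)/n$. Your extended discussion of the non-triangulation case is actually more careful than the paper itself, which ignores the issue entirely; the weight-$0$ dummy-edge workaround is the cleanest fix (your alternative suggestion that the Theorem~\ref{th:od} paths can be assumed to avoid added edges is not generally true and should be dropped).
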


In the following section we show that the decision version of the weighted oriented diameter problem is weakly NP-complete.}

\jyoti{\subsection{Planar Weighted Oriented Diameter is Weakly NP-complete}
Here we show that given a planar  graph $G$ and an integer $k$, it is weakly NP-complete to decide whether $G$ has a strongly connected orientation of diameter at most $k$. We formally define the problem as follows:

\begin{enumerate}
\item[] \textbf{Problem:} \textsc{Planar Weighted Oriented Diameter}
    \item[] \textbf{Input:} An undirected connected planar graph $G$, where each edge is weighted with a positive real number, and  a positive integer $D$.
    \item[] \textbf{Question:} Is there a strongly connected orientation of $G$  such that the weighted diameter of the resulting oriented graph is at most $D$?
\end{enumerate}
 
The problem is in NP because given an orientation of the edges of $G$, one can verify whether the weighted oriented diameter is within $D$ in polynomial-time by computing all pair shortest paths.

We will reduce the weakly NP-complete problem \textsc{Partition}, which is defined as follows:

\begin{enumerate}
\item[] \textbf{Problem:} \textsc{Partition}
    \item[] \textbf{Input:} A  multiset $S$ of positive integers.
    \item[] \textbf{Question:} Can $S$ can be partitioned into two subsets with equal sum?
\end{enumerate}

Given an instance $S = \{a_1,a_2,\ldots,a_n\}$ of \textsc{Partition}, we construct an instance $I = (G,D)$ of \textsc{Planar Weighted Oriented Diameter} as follows:

\begin{enumerate}
\item[] \textit{Step 1.} Take a $2\times (n+1)$ grid graph, e.g.,  Figure~\ref{figure:hard}(a). Let $v_1,\ldots,v_{n+1}$ be the vertices at the top row and let $w_1,\ldots,w_{n+1}$ be the vertices at the bottom row. and set the weights of  the top edges $(v_1,v_2),\ldots, (v_n,v_{n+1})$ using the numbers $a_1,\ldots,a_n$ from left to right.

\item[] \textit{Step 2.} Subdivide the leftmost vertical edge $(v_1,w_1)$ with a division vertex $s$. Similarly, subdivide the rightmost vertical edge with a division vertex $t$. 

\item[] \textit{Step 3.} Replace each internal vertical edge $(v_i,w_i)$, where $1<i<n+1$, with a cycle $v_i,d_{v_iw_i},w_i,d'_{v_iw_i},v_i$, where $d_{v_iw_i}$ and $d'_{v_iw_i}$ are two new vertices, e.g.,  
Figure~\ref{figure:hard}(b). We will refer to these cycles as \emph{inner cycles}.

\item[] \textit{Step 4.} Set the weight of all the edges except for the top edges to $\epsilon = \frac{1}{m}$, where $m$ is the number of edges in the graph. Set $D$ to be equal to $1+\frac{1}{2}\sum_{i=1}^{n}a_i-\epsilon$.

\end{enumerate}
\begin{figure}[h]
\centering
\includegraphics[width=\textwidth]{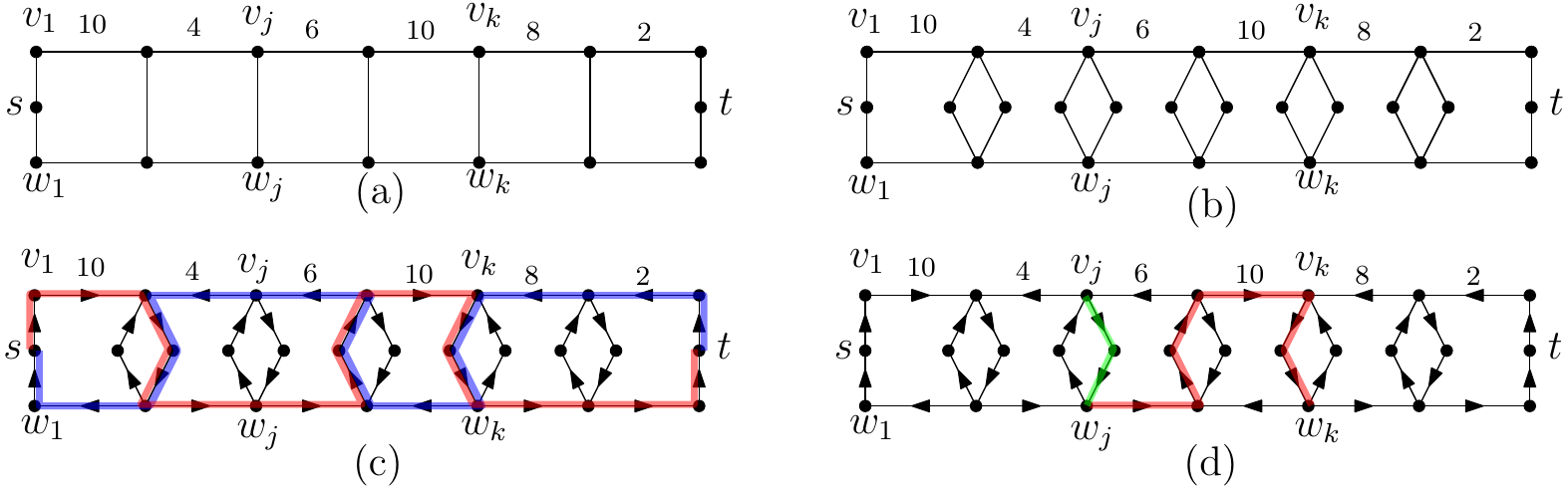}
\caption{(a)--(b) Construction of $G$ from $S$. (c)--(d) Illustration for the orientation.} 
\label{figure:hard} 
\end{figure}

\begin{lemma}
\label{cross}
Let $\vv{G}$ be a strong orientation of $G$. Let  $\vv{P}$ and  $\vv{Q}$ be the directed paths from $s$ to $t$ and from $t$ to $s$, respectively. Then $\vv{P}$ and  $\vv{Q}$ partition the top edges of $G$. 
\end{lemma}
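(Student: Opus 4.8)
The plan is to use the planar structure of $G$ together with the fact that the only two vertices of degree two are $s$ and $t$. First I would observe that $G$ is a $2$-connected plane graph whose outer face is bounded by the cycle consisting of the top path $v_1,\dots,v_{n+1}$, the right edge down to $t$ and from $t$ to $w_{n+1}$, the bottom path $w_{n+1},\dots,w_1$, and the left edge from $w_1$ up through $s$ to $v_1$. The top edges $(v_i,v_{i+1})$ all lie on this outer cycle. The key topological point is that $s$ and $t$ divide the outer cycle into two arcs: the ``top arc'' $A_{\mathrm{top}}$ from $s$ up to $v_1$, across all the top edges, down the right side to $t$; and the ``bottom arc'' $A_{\mathrm{bot}}$ from $t$ along the bottom path to $w_1$ and up to $s$. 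So the top edges are exactly the edges of $A_{\mathrm{top}}$ other than the two half-edges incident to $s$ and $t$, and I must show $\vv{P}$ and $\vv{Q}$ together traverse each top edge exactly once (in aggregate, partitioning them).

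Next I would argue about reachability. Because $\vv{G}$ is strongly connected, such $\vv{P}$ and $\vv{Q}$ exist; I may take each to be a simple directed path. Consider the ``ladder'' structure: the graph obtained from $G$ by contracting each inner cycle $v_i,d_{v_iw_i},w_i,d'_{v_iw_i}$ to a single vertical rung is essentially a $2\times(n+1)$ grid (a ladder) with $s,t$ subdividing the end rungs. In a ladder, any $s$--$t$ walk, after leaving $s$, must commit to either the top row or the bottom row between consecutive rungs, and to switch rows it must use a rung. The crucial consequence is: a simple directed path from $s$ to $t$ uses, for each $i$, \emph{exactly one} of the top edge $(v_i,v_{i+1})$ or the bottom edge $(w_i,w_{i+1})$ — it cannot use both (that would force revisiting a rung vertex, violating simplicity) and it cannot skip both (there is no edge joining column $i$ to column $i+1$ other than these two). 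The same holds for $\vv{Q}$. I would make this precise by looking at the vertical cut between ``column $i$'' (vertices $v_i,w_i$ and, if $1<i$, the division vertices on its rung) and the rest: the only edges crossing the cut between columns $\le i$ and columns $\ge i+1$ are $(v_i,v_{i+1})$ and $(w_i,w_{i+1})$, so a simple $s$--$t$ path crosses this cut an odd number of times, hence exactly once (it can cross at most twice, and two crossings force a repeated vertex on the length-$2$ separator $\{v_i,w_i\}$ — actually $\{v_{i+1},w_{i+1}\}$ on the far side — contradicting simplicity).

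Finally I would combine the two paths. Orient things so that $\vv{P}$ uses top edge $(v_i,v_{i+1})$ for $i$ in some set $X\subseteq\{1,\dots,n\}$ and bottom edge otherwise, and $\vv{Q}$ uses top edge for $i$ in some set $Y$. I must show $X$ and $Y$ partition $\{1,\dots,n\}$, i.e.\ $X\cap Y=\varnothing$ and $X\cup Y=\{1,\dots,n\}$. For disjointness: if some top edge $(v_i,v_{i+1})$ were used by both $\vv{P}$ (running $s\to t$) and $\vv{Q}$ (running $t\to s$), then in the orientation $\vv{G}$ this single directed edge would have to be oriented both $v_i\to v_{i+1}$ and $v_{i+1}\to v_i$, impossible. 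For covering: suppose top edge $(v_i,v_{i+1})$ is used by neither. Then both $\vv{P}$ and $\vv{Q}$ use the bottom edge $(w_i,w_{i+1})$; but $\vv{P}$ needs it oriented $w_i\to w_{i+1}$ (it is the column $i\to i{+}1$ crossing on a path heading toward $t$, which is on the right) and $\vv{Q}$ needs it oriented $w_{i+1}\to w_i$, again a contradiction with a fixed orientation. Hence every top edge lies in exactly one of $\vv{P},\vv{Q}$, which is the claimed partition.

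The main obstacle I anticipate is making the ``exactly one of top/bottom edge per column'' claim airtight for an \emph{arbitrary} strong orientation rather than the specific one pictured — in particular handling the inner-cycle gadgets correctly, since a directed path could in principle wander into $d_{v_iw_i}$ or $d'_{v_iw_i}$ and come back; I would dispatch this by noting each such division vertex has degree $2$, so any simple path entering it must pass straight through, contributing nothing new to the column-crossing count. Once the cut argument is set up cleanly, the rest is just the orientation-consistency observation that a fixed edge cannot be traversed in both directions.
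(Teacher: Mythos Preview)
Your proof is correct and follows essentially the same approach as the paper: both arguments rest on the two-edge cut $\{(v_i,v_{i+1}),(w_i,w_{i+1})\}$ separating $s$ from $t$, forcing each simple directed $s$--$t$ (resp.\ $t$--$s$) path to use exactly one of these edges, and then obtain a contradiction from a single edge needing to be traversed in both directions. Your disjointness step (a direct orientation conflict on the shared top edge) is in fact a bit cleaner than the paper's, which instead traces $\vv{Q}$ through the bottom edge and argues the path cannot be completed.
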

\begin{proof}
We first show that every top edge either   belongs  to $\vv{P}$ or to $\vv{Q}$. Suppose for a contradiction that there exists a top edge $(v_i,v_{i+1})$  that neither belongs  to $\vv{P}$ nor to $\vv{Q}$. Then $(w_i,w_{i+1})$ must be a cut edge in $G$, i.e., removal of  $(w_i,w_{i+1})$ would destroy any path that connects $s$ to $t$ and  $t$ to $s$. Therefore, this edge must belong to both $\vv{P}$ and  $\vv{Q}$. Assume without loss of generality that $(w_i,w_{i+1})$ is directed from $w_i$ to $w_{i+1}$, then no directed path from $t$ can reach $w_i$, which contradicts that $(w_i,w_{i+1})$ belongs to $\vv{Q}$.

We now show that no top edge can belong to both  $\vv{P}$ and  $\vv{Q}$. Suppose for a contradiction that a top edge $(v_i,v_{i+1})$    belongs  to both $\vv{P}$ and $\vv{Q}$.  Assume without loss of generality that $(v_i,v_{i+1})$ is directed from $v_i$ to $v_{i+1}$.  To reach   $v_i$, the path $\vv{Q}$ must  first traverse $w_{i+1}$ to $w_i$. However, in this case, the path cannot be extended to reach both $s$ and $v_{i+1}$.
\end{proof}

In the following we show that $S$ admits a bipartition if and only if $G$ admits a strong orientation with  diameter   at most $D$.}

\jyoti{\subsubsection{From Bipartition to Oriented Diameter}
 
Assume   that $S$ admits a partition into two sets $A$ and $B$  such that sum of the elements in $A$ is equal to that of $B$. We now show how to obtain a strong orientation for $G$ such that the diameter is at most $D$. 

\bigskip\noindent
\textbf{Orientation of $G$:} We first orient the top edges corresponding to $A$ from left to right. We then construct a directed $s$ to $t$ path $\vv{P}$ by using these edges but avoiding the top edges that correspond to $B$ as follows. We start at $s$ and for each subsequent top edge $(v_i,v_{i+1})$ of $A$, we connect the current vertex to $v_i$ either using a vertical edge (i.e., if $v_i=v_1$), or using a path that first moves downwards to  the bottom row, then traverses to the right and finally moves upward to reach $v_i$. Let $v_k$ be the last vertex on the top row that corresponds to $A$. Then $v_k$ is connected to $t$ either using a vertical edge (i.e., if $v_{k}=v_{n+1}$), or using a path that  first moves downwards to  the bottom row, then traverses to the right and finally moves upward to reach $t$.  Figure~\ref{figure:hard}(c) illustrates such a path in red where $A=\{10,10\}$. }

We now show that there is an undirected $s$ to $t$ path that contains each top edge  corresponding to $B$. By the construction of $\vv{P}$, we can observe that for each $j$ from 2 to $n$, $v_j$ and $w_j$ are still connected by an undirected path. Therefore, to restrict  $s$ from reaching the first top edge of $B$ using an undirected path, $\vv{P}$ needs to include both $(v_i,v_{i+1})$ and $(w_i,w_{i+1})$ where $i<j$. However,  $\vv{P}$ never includes both $(v_i,v_{i+1})$ and $(w_i,w_{i+1})$. In the same way we can show that the undirected path can be extended to include all the top edges of $B$ and to reach $t$. We then make the path a directed path $\vv{Q}$ from $t$ to $s$.   Figure~\ref{figure:hard}(c) illustrates such a path in blue where $B=\{4,6,2,8\}$. 

Finally, we orient the edges of the inner cycles to form a directed cycle as follows. Let $v_i,d_{v_iw_i},w_i,d'_{v_iw_i},v_i$ be an inner cycle. If at most one of the paths among $v_i,d_{v_iw_i},w_i$ and $v_i,d'_{v_iw_i},w_i$ are already oriented, then we can orient the other path to form a directed cycle. If both these paths are oriented and if they are oriented in opposite directions (one upward and the other downward), then we already have a directed cycle. Otherwise, the paths are oriented in the same direction. In this case, we  reverse the orientation of one of these paths to obtain a directed cycle, which does not change the lengths of $\vv{P}$ and  $\vv{Q}$.   

\bigskip\noindent
\textbf{Bound on Weighted Diameter:} By construction, each of  $\vv{P}$ and $\vv{Q}$ is of length at most $\frac{n}{m}+\frac{1}{2}\sum_{i=1}^{n}a_i<D$.

Consider now a pair of vertices $v_j,v_k$ on the top row, where $j>k$. We now show how to find a $v_j$ to $v_k$ path of length at most $D$. If they both lie on $\vv{P}$ (or on $\vv{Q}$), then they are already connected by a path of length at most $D$. Otherwise, without loss of generality assume that $v_j$ lies on $Q$ and $v_k$ does not lie  on $Q$. By Lemma~\ref{cross}, $v_k$ must lie on $P$. Since  $\vv{P}$ starts at $s$ and to reach $v_k$, it must pass through $w_j$. Therefore, we can move downward from $v_j$ to reach  $w_j$ and then follow $\vv{P}$ to reach $v_k$, e.g.,  Figure~\ref{figure:hard}(d). Thus the length of the path would be at most $\frac{n}{m}+\frac{1}{2}\sum_{i=1}^{n}a_i<D$.

Consider now a pair of vertices $w_j,w_k$ on the bottom row. Since $n<m/2$ and since $v_j$ and $v_k$ are connected to $w_j$ and $w_k$ with the directed inner cycles, respectively,  $w_j$ and  $w_k$ are  connected to each other with paths of length at most $D$ (through $v_j$ and $v_k$).    A similar argument applies to every pair of inner vertices of $G$, and also to each  pair that contain an inner vertex and an outer vertex.

\jyoti{\subsubsection{From Oriented Diameter to Bipartition}

Assume that $G$ admits a strong orientation $\vv{G}$ with  diameter is at most $D$. We now show how to find the required partition for $S$. Let $\vv{\sigma_{st}}$ be the directed shortest path from $s$ to $t$. Similarly, let  $\vv{\sigma_{ts}}$ be the directed shortest path from $t$ to $s$. By Lemma~\ref{cross}, $\vv{\sigma_{st}}$ and  $\vv{\sigma_{ts}}$ partition the top edges of $G$. Let the corresponding subsets of $S$ be $A$ and $B$, respectively. If the sum of the numbers in $A$ is strictly smaller than that of $B$, then the sum of the numbers of $B$ is at least $1+\frac{1}{2}\sum_{i=1}^{n}a_i > D$. This contradicts that the diameter of $\vv{G}$ is at most $D$.   The same argument holds when then the sum of the numbers of $B$ is strictly smaller than that of $A$. Therefore, $A$ and $B$ must be the required partition. 

\smallskip\noindent
The following theorem summarizes the result of this section.
\begin{theorem}
\textsc{Planar Weighted Oriented Diameter} is weakly NP-complete.
\end{theorem}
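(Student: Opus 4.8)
The plan is to establish \textsc{Planar Weighted Oriented Diameter} is weakly NP-complete by combining the membership observation already stated with the reduction from \textsc{Partition} built in Steps~1--4. Membership in NP is immediate: given a candidate orientation, computing all-pairs shortest paths and checking that every distance is at most $D$ is polynomial, so the real content is the two directions of correctness that the two preceding subsections establish, together with verifying that the constructed instance $I=(G,D)$ is (i) planar and (ii) of polynomial size in the unary-free encoding of $S$ — this last point is what makes the hardness \emph{weak} rather than strong, since $D=1+\frac12\sum_i a_i-\epsilon$ and the top-edge weights $a_i$ are encoded in binary.

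First I would assemble the forward direction: if $S$ partitions into $A,B$ of equal sum, then the ``From Bipartition to Oriented Diameter'' construction produces a strong orientation in which the two ``spine'' paths $\vv{P}$ (from $s$ to $t$, carrying the top edges of $A$) and $\vv{Q}$ (from $t$ to $s$, carrying the top edges of $B$) each have weight $\frac{n}{m}+\frac12\sum_i a_i<D$, and every other ordered pair of vertices is served by a detour of no greater weight because the small $\epsilon=\frac1m$ edges contribute a negligible additive term and $n<m/2$. I would present this as: orient top edges of $A$ left-to-right, route $\vv{P}$ through bottom-row bypasses around the top edges of $B$, observe (using that $\vv{P}$ never uses both $(v_i,v_{i+1})$ and $(w_i,w_{i+1})$) that the complementary undirected $s$--$t$ path through all top edges of $B$ survives and can be oriented as $\vv{Q}$, and finally orient each inner cycle into a directed 4-cycle, reversing one side if necessary without affecting $|\vv{P}|$ or $|\vv{Q}|$. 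Then a short case analysis over pairs of top-row vertices (using Lemma~\ref{cross} to locate $v_k$ on $\vv{P}$ when $v_j$ is on $\vv{Q}$), pairs of bottom-row vertices, and pairs involving inner/outer vertices, all via the directed inner cycles, bounds the diameter by $D$.

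For the reverse direction, assume $\vv{G}$ is a strong orientation of diameter at most $D$. The key leverage is Lemma~\ref{cross}: the shortest directed $s\to t$ path $\vv{\sigma_{st}}$ and shortest $t\to s$ path $\vv{\sigma_{ts}}$ partition the top edges, so they induce a bipartition $A\sqcup B$ of $S$. If this bipartition were unbalanced, say $\sum_{a\in A}a<\sum_{b\in B}b$, then since all $a_i$ are integers the heavier side has weight at least $1+\frac12\sum_i a_i>D$, and that side lies on a single shortest path between $s$ and $t$, contradicting the diameter bound. Hence $A$ and $B$ have equal sum, giving the \textsc{Partition} solution. I would then note the reduction is clearly polynomial-time (the gadget has $O(n)$ vertices and edges, all weights writable in polynomial bits) and planar (the $2\times(n+1)$ grid with subdivisions and inner cycles drawn in the obvious way), completing the proof.

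The main obstacle — and the part deserving the most care — is the ``easy'' pair bookkeeping in the forward direction: one must check that \emph{every} ordered pair, not just the extremal $s,t$ pair, is within $D$ under the constructed orientation, and in particular that detours through bottom-row bypasses and through the directed inner cycles never accumulate more than $O(n)\cdot\epsilon=O(n/m)<1$ extra weight, so that the slack $1-\epsilon$ built into $D$ absorbs them. Making the inequality $\frac{n}{m}+\frac12\sum_i a_i<D=1+\frac12\sum_i a_i-\epsilon$ hold robustly is exactly why $\epsilon$ is chosen as $\frac1m$ with $m$ the total edge count; I would state this slack bound once and invoke it uniformly across all the pair cases rather than recomputing it each time.
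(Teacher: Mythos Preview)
The proposal is correct and follows essentially the same approach as the paper: NP membership via all-pairs shortest-path verification, the Steps~1--4 reduction from \textsc{Partition}, the forward direction constructing $\vv{P}$ and $\vv{Q}$ and bounding all ordered pairs via Lemma~\ref{cross} and the directed inner cycles, and the reverse direction using Lemma~\ref{cross} together with integrality of the $a_i$ to force a balanced split. Your added remarks on planarity, polynomial encoding size, and the role of the $\epsilon$-slack make explicit what the paper leaves implicit, but the argument is the same.
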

Since the pathwidth of $G$ is $O(1)$, we obtain the following corollary. 
\begin{corollary}
\textsc{Planar Weighted Oriented Diameter} is weakly NP-complete, even when the pathwidth of the input graph is bounded by a constant.
\end{corollary}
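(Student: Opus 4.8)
The plan is to derive the corollary directly from the preceding theorem by inspecting the graph $G$ produced by the reduction in Steps 1--4: since \textsc{Planar Weighted Oriented Diameter} is already known to be weakly NP-hard (and in NP) via the reduction from \textsc{Partition}, it suffices to show that this particular $G$ has pathwidth bounded by an absolute constant. Recall that $G$ is a $2\times(n+1)$ grid in which the two extreme vertical edges have been subdivided once (creating $s$ and $t$) and every interior vertical edge $(v_i,w_i)$ has been replaced by the $4$-cycle $v_i,d_{v_iw_i},w_i,d'_{v_iw_i},v_i$; note that, as a consequence, $G$ contains no ``plain'' vertical edge. The underlying $2\times(n+1)$ grid has pathwidth at most $2$, witnessed by the left-to-right sequence of bags $B_i=\{v_i,w_i,v_{i+1},w_{i+1}\}$ for $i=1,\dots,n$, so it only remains to absorb the subdivision vertices $s,t$ and the inner-cycle vertices without blowing up the bag size.

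First I would attach the extra vertices to the bags of the column they belong to: put $s$ into $B_1$, put $t$ into $B_n$, and for each interior column $i$ (so $2\le i\le n$) put $d_{v_iw_i}$ and $d'_{v_iw_i}$ into $B_i$ (any bag of the form $B_{i-1}$ or $B_i$ contains both $v_i$ and $w_i$, so either works). Each resulting bag then has at most $7$ vertices, hence width $O(1)$. Next I would verify the path-decomposition axioms: (i) every vertex of $G$ occurs in some bag ($v_i,w_i$ in $B_{i-1}$ and $B_i$; $s$ in $B_1$; $t$ in $B_n$; the cycle vertices of column $i$ in $B_i$); (ii) every edge of $G$ has both endpoints in a common bag (the horizontal grid edges $(v_i,v_{i+1})$, $(w_i,w_{i+1})$ and the four edges of inner cycle $i$ in $B_i$; the two edges incident to $s$ in $B_1$; the two incident to $t$ in $B_n$); and (iii) the bags containing any fixed vertex form a contiguous interval of $B_1,\dots,B_n$ (each grid vertex lives in the two consecutive bags $B_{i-1},B_i$, while $s$, $t$, and each cycle vertex live in exactly one bag). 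Hence $B_1,\dots,B_n$, thickened as above, is a valid path decomposition of $G$ of width $O(1)$.

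I do not expect a genuine obstacle here; the entire content is the explicit construction above, and the only step requiring a little care is the bookkeeping at the two extreme columns (where $s$ and $t$ are folded in) and inside the inner cycles, so that the interval property of (iii) is not broken. Combining this constant-pathwidth bound for $G$ with the preceding theorem that \textsc{Planar Weighted Oriented Diameter} is weakly NP-complete yields the corollary. If one prefers a cleaner bag size, each overloaded $B_i$ can be further split into two consecutive bags that each gain only a single new vertex, which is routine and does not affect the asymptotics.
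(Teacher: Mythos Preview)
Your proposal is correct and follows the same approach as the paper: the paper simply asserts in one line that the reduction graph $G$ has pathwidth $O(1)$ and invokes the preceding theorem, whereas you supply the explicit path decomposition witnessing this. The construction and verification you give are sound.
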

}

\section{Conclusion}

 In this paper we computed exact value of the oriented diameter for triangular grid graphs and proved an $n/3$ lower  bound and an  $n/2+O(\sqrt{n})$ upper
 bound on the oriented diameter of planar triangulations. A natural direction for future research would be to close the gap between the lower bound and the upper bound. 
 We also showed that the weighted version of the oriented diameter problem is weakly NP-complete.  
 Although the time complexity of  the unweighted version  remains open, it would be interesting to examine whether the weighted version is strongly NP-hard.

\bibliographystyle{abbrv}
\bibliography{bib}

\begin{thebibliography}{10}

\bibitem{Kumar2020}
K.~S. Ajish~Kumar, D.~Rajendraprasad, and K.~S. Sudeep.
\newblock Oriented diameter of star graphs.
\newblock In {\em Proceedings of the 6th International Conference on Algorithms
  and Discrete Applied Mathematics}, volume 12016, pages 307--317. Springer,
  2020.

\bibitem{BonichonSM02}
N.~Bonichon, B.~L. Sa{\"e}c, and M.~Mosbah.
\newblock Optimal area algorithm for planar polyline drawings.
\newblock In {\em Proceedings of the 28th International Workshop on
  Graph-Theoretic Concepts in Computer Science (WG)}, volume 2573 of {\em
  LNCS}, pages 35--46. Springer, 2002.

\bibitem{Chvatal1978}
V.~Chvatal and C.~Thomassen.
\newblock Distances in orientations of graphs.
\newblock {\em Journal of Combinatorial Theory, Series B}, 24(1):61--75, 1978.

\bibitem{Dankelmann2018}
P.~Dankelmann, Y.~Guo, and M.~Surmacs.
\newblock Oriented diameter of graphs with given maximum degree.
\newblock {\em Journal of Graph Theory}, 88:5--17, 2018.

\bibitem{DBLP:journals/acta/DjidjevV97}
H.~N. Djidjev and S.~M. Venkatesan.
\newblock Reduced constants for simple cycle graph separation.
\newblock {\em Acta Informatica}, 34(3):231--243, 1997.

\bibitem{Eggemann2009}
N.~Eggemann and S.~D. Noble.
\newblock Minimizing the oriented diameter of a planar graph.
\newblock {\em Electronic Notes in Discrete Mathematics}, 34:267--271, 2009.

\bibitem{Fomin2001}
F.~V. Fomin, M.~Matamala, E.~Prisner, and I.~Rapaport.
\newblock Bilateral orientations and domination.
\newblock {\em Electronic Notes in Discrete Mathematics}, 7, 2001.

\bibitem{Fomin2002}
F.~V. Fomin, M.~Matamala, and I.~Rapaport.
\newblock The complexity of approximating the oriented diameter of chordal
  graphs.
\newblock {\em International Workshop on Graph-Theoretic Concepts in Computer
  Science}, 2573:211--222, 2002.

\bibitem{Fomin2004}
F.~V. Fomin, M.~Matamala, and I.~Rapaport.
\newblock {AT}-free graphs: linear bounds for the oriented diameter.
\newblock {\em Discrete Applied Mathematics}, 141:135--148, 2004.

\bibitem{DBLP:journals/dam/FraigniaudL94}
P.~Fraigniaud and E.~Lazard.
\newblock Methods and problems of communication in usual networks.
\newblock {\em Discret. Appl. Math.}, 53(1-3):79--133, 1994.

\bibitem{Fujita2013}
S.~Fujita.
\newblock On oriented diameter of star graphs.
\newblock In {\em First International Symposium on Computing and Networking},
  pages 48--56, 2013.

\bibitem{DBLP:conf/sigal/GazitM90}
H.~Gazit and G.~L. Miller.
\newblock Planar separators and the {E}uclidean norm.
\newblock In {\em Proceedings of the International Symposium on Algorithms
  ({SIGAL})}, volume 450 of {\em LNCS}, pages 338--347. Springer, 1990.

\bibitem{Gutin2002}
G.~Gutin and A.~Yeo.
\newblock Orientations of digraphs almost preserving diameter.
\newblock {\em Discrete Applied Mathematics}, 121:129--138, 2002.

\bibitem{koh1996}
K.~M. Koh, B.~P. Tan, and I.~Rapaport.
\newblock The diameter of an orientation of a complete multipartite graph.
\newblock {\em Discret. Math.}, 149(1-3):131--139, 1996.

\bibitem{DBLP:journals/dm/KohT00a}
K.~M. Koh and E.~G. Tay.
\newblock On optimal orientations of cartesian products of graphs (ii):
  complete graphs and even cycles.
\newblock {\em Discret. Math.}, 211:75--102, 2000.

\bibitem{koh2002}
K.~M. Koh and E.~G. Tay.
\newblock Optimal orientations of graphs and digraphs: A survey.
\newblock {\em Graphs Comb.}, 18(4):745--756, 2002.

\bibitem{DBLP:journals/networks/KonigKL98}
J.~K{\"{o}}nig, D.~W. Krumme, and E.~Lazard.
\newblock Diameter-preserving orientations of the torus.
\newblock {\em Networks}, 32(1):1--11, 1998.

\bibitem{DBLP:journals/siamcomp/Krumme92}
D.~W. Krumme.
\newblock Fast gossiping for the hypercube.
\newblock {\em {SIAM} J. Comput.}, 21(2):365--380, 1992.

\bibitem{Sascha2012}
S.~Kurz1y and M.~Latsch.
\newblock Bounds for the minimum oriented diameter.
\newblock {\em Discrete Mathematics and Theoretical Computer Science},
  14(1):109--142, 2012.

\bibitem{Kwok2010}
P.~K. Kwok, Q.~Liu, and D.~B. West.
\newblock Oriented diameter of graphs with diameter 3.
\newblock {\em Journal of Combinatorial Theory}, 100(3):265--273, 2010.

\bibitem{McCanna}
J.~E. McCanna.
\newblock Orientations of the n-cube with minimum diameter.
\newblock {\em Discrete Mathematics}, 68(2-3):309--313, 1988.

\bibitem{DBLP:journals/jcss/Miller86}
G.~L. Miller.
\newblock Finding small simple cycle separators for 2-connected planar graphs.
\newblock {\em J. Comput. Syst. Sci.}, 32(3):265--279, 1986.

\bibitem{Ng2005}
K.~L. Ng and K.~M. Koh.
\newblock On optimal orientation of cycle vertex multiplications.
\newblock {\em Discret. Math.}, 297(1-3):104--118, 2005.

\bibitem{DBLP:journals/siamdm/RobertsX88}
F.~S. Roberts and Y.~Xu.
\newblock On the optimal strongly connected orientations of city street graphs
  {I:} large grids.
\newblock {\em {SIAM} J. Discret. Math.}, 1(2):199--222, 1988.

\bibitem{DBLP:journals/networks/RobertsX92}
F.~S. Roberts and Y.~Xu.
\newblock On the optimal strongly connected orientations of city street graphs.
  {III.} three east-west avenues or north-south streets.
\newblock {\em Networks}, 22(2):109--143, 1992.

\bibitem{S90}
W.~Schnyder.
\newblock Embedding planar graphs on the grid.
\newblock In {\em Proceedings of the 1st Annual ACM-SIAM Symposium on Discrete
  Algorithms (SODA), San Francisco, California, USA}, pages 138--148. ACM,
  1990.

\bibitem{DBLP:journals/jgt/WangCDGSV21}
X.~Wang, Y.~Chen, P.~Dankelmann, Y.~Guo, M.~Surmacs, and L.~Volkmann.
\newblock Oriented diameter of maximal outerplanar graphs.
\newblock {\em J. Graph Theory}, 98(3):426--444, 2021.

\bibitem{ZhangH05}
H.~Zhang and X.~He.
\newblock Canonical ordering trees and their applications in graph drawing.
\newblock {\em Discrete {\&} Computational Geometry}, 33(2):321--344, 2005.

\end{thebibliography}

\end{document}